\newtheorem{theo}{Theorem}
\newtheorem{propo}[theo]{Proposition}
\newcommand{\itl}{\textit}
\def\fpd#1#2{{\displaystyle\frac{\partial #1}{\partial #2}}}
\def\spd#1#2#3{{\displaystyle\frac{\partial^2 #1}
{\partial #2\partial #3}}}
\def\onehalf{{\frac12}}
\title{Hamiltonization of Nonholonomic Systems and
the Inverse Problem of the Calculus of Variations}
\author{A.M.\ Bloch$^{a}$\footnote{abloch@umich.edu},\, O.E.\ Fernandez$^{a}$\footnote{oscarum@umich.edu}\, and T.\
Mestdag$^{a,b}$\footnote{tom.mestdag@ugent.be}\\[2mm] {\small $^{a}$Department
of Mathematics, University of Michigan,}\\ {\small 530 Church
Street, Ann
Arbor, MI-48109, USA} \\[2mm] {\small $^{b}$Department of Mathematical Physics and
Astronomy, Ghent University,}\\ {\small Krijgslaan 281, S9, 9000
Gent, Belgium}}
\date{}
\begin{document}

\maketitle


\begin{abstract}

We introduce a method which allows one to recover the equations of
motion of a class of nonholonomic systems by finding instead an
unconstrained Hamiltonian system on the full phase space, and to
restrict the resulting canonical equations to an appropriate
submanifold of phase space. We focus first on the Lagrangian picture
of the method and deduce the corresponding Hamiltonian from the
Legendre transformation. We illustrate the method with several
examples and we discuss its relationship to the Pontryagin maximum
principle.

\end{abstract}


Keywords: nonholonomic system, inverse problem, Pontryagin's
principle, control system, Hamiltonian, quantization.


\section{Introduction}

The direct motivation of this paper lies with some interesting
results that appeared in the paper \cite{QB}, wherein the authors
propose a way to quantize some of the well-known classical examples
of nonholonomic systems. On the way to quantization, the authors
propose an alternative Hamiltonian representation of nonholonomic
mechanics. In short, the authors start off from the actual solutions
of the nonholonomic system, and apply a sort of Hamilton-Jacobi
theory to arrive at a Hamiltonian whose Hamilton's equations, when
restricted to a certain subset of phase space, reproduce the
nonholonomic dynamics. Needless to say, even without an explicit
expression for the solutions
 one can still derive a lot of the interesting
 geometric features and of the qualitative behaviour of a nonholonomic system.
 However, the ``Hamiltonization'' method introduced in \cite{QB} is not generalized to systems for which the
  explicit solution is not readily available, and hence cannot be applied to those systems.

In this paper, we wish to describe a method to Hamiltonize a class
of nonholonomic systems that does not depend on the knowledge of the
solutions of the system. Instead, we will start from the Lagrangian
equations of motion of the system and treat the search for a
Hamiltonian which Hamiltonizes the dynamics as the search for a
regular Lagrangian. That is, we will explain how one can associate
to the nonholonomic equations of motion a family of systems of
second-order ordinary differential equations and we will apply the
inverse problem of the calculus of variations \cite{Douglas,CV} on
those associated systems. If an unconstrained (or free) regular
Lagrangian exists for one of the associated systems, we can always
find an associated Hamiltonian by means of the Legendre
transformation. Since our method only makes use of the equations of
motion of the system, it depends only on the Lagrangian and
constraints of the nonholonomic system, but not on the knowledge of
the exact solutions of the system.

A system for which no exact solutions are known can only be
integrated by means of numerical methods. In addition to the above
mentioned application to quantization, our Hamiltonization method
may also be useful from this point of view. A geometric integrator
of a Lagrangian system uses a discrete Lagrangian that resembles as
close as possible the continuous Lagrangian (see e.g.\ \cite{West}).
On the other hand, the succes of a nonholomic integrator (see e.g.\
\cite{CortesMartinez,FedZen}) relies not only on the choice of
 a discrete Lagrangian but also on the choice of a discrete version of the constraint manifold. It seems therefore
reasonable that if a free Lagrangian for the nonholonomic system
exists, the Lagrangian integrator may perform better than a
nonholonomic integrator with badly chosen discrete constraints. Work
along these lines is in progress.

It should be remarked from the outset that the Hamiltonization we
have in mind is different from the ``Hamiltonization'' used in e.g.\
the papers \cite{Borisov,Ehlers}. Roughly speaking, these authors
first project a given nonholonomic system with symmetry to a system
on a reduced space and then use a sort of time reparametrization to
rewrite the reduced system in a Hamiltonian form in the new time
(this is the so-called Chaplygin's reducibility trick). In contrast,
we embed the (unreduced) nonholonomic system in a larger Hamiltonian
one.


In the second part of the paper, we show that in the cases where a
regular Lagrangian (and thus a Hamiltonian) exists, we can also
associate a first order controlled system to the nonholonomic
system. As an interesting byproduct of the method it turns out that
if one considers the optimal control problem of minimizing the
controls for an appropriate cost function under the constraint of
that associated first order controlled system, Pontryagin's maximum
principle leads in a straightforward way to the associated
Hamiltonians.

We begin with a quick review of nonholonomic mechanics in section 2,
where we introduce some of the well-known classical nonholonomic
systems which fall into the class of systems we will be studying in
the current paper. We then begin our investigations in section 3
with the Lagrangian approach to the problem. We detail the various
ways to associate a second-order system to a nonholonomic system,
which then forms the backbone of our subsequent analysis. In section
4 we briefly review the set up for the inverse problem of the
calculus of variations, and then apply it to some of the associated
second-order systems. We derive the corresponding Hamiltonians in
section 5 and discuss their relation with Pontryagin's maximum
principle in section 6.  At the end of the paper we provide a few
directions for future work on generalizing our findings to more
general nonholonomic systems, as well as applying them to quantize
nonholonomic systems.


\section{Nonholonomic systems}

Nonholonomic mechanics takes place on a configuration space $Q$ with
a nonintegrable distribution ${\cal D}$ that describes the (linear
supposed) kinematic constraints of interest. These constraints are
often given in terms of independent one-forms, whose vanishing in
turn describes the distribution. Moreover, one typically assumes
that one can find a fibre bundle and an Ehresmann connection $A$ on
that bundle such that ${\cal D}$ is given by the horizontal
subbundle associated with $A$. Such an approach is taken, for example, in some recent books on nonholonomic systems \cite{Bl,Co}.

Let $Q$ be coordinatized by coordinates $(r^I,s^\alpha)$, chosen in
such a way that the projection of the above mentioned bundle structure
is locally simply $(r,s) \mapsto r$. Moreover, let $\{\omega^{\alpha}\}$ be a
set of independent one-forms whose vanishing describes the
constraints on the system. Locally, we can write them as
\[
\omega^{\alpha}(r,s) =
ds^{\alpha}+A^{\alpha}_{I}(r,s)dr^{I}.
\]
The distribution ${\cal D}$ is then given by
\[ {\cal D} =
\text{span}\{\partial_{r^{I}}-A^{\alpha}_{I}\partial_{s^{\alpha}}\}.
\]
One then derives the equations of motion using the
Lagrange-d'Alembert principle, which takes into account the need for
reaction forces that enforce the constraints throughout the motion
of the system (see e.g.\ \cite{Bl}). If $L(r^\alpha,s^a,{\dot
r}^\alpha,{\dot s}^\alpha)$ is the Lagrangian of the system, these
equations are
\[
\frac{d}{dt}\Big(\fpd{L}{{\dot r}^I}\Big) - \fpd{L}{r^I} =
\lambda_\alpha A^\alpha_I\quad\mbox{and}\quad
\frac{d}{dt}\Big(\fpd{L}{{\dot s}^\alpha}\Big) - \fpd{L}{s^\alpha} =
\lambda_\alpha,
\]
together with the constraints ${\dot s}^\alpha = - A^\alpha_I {\dot
r}^{I}$. One can easily eliminate the Lagrange multipliers $\lambda$
and rewrite the above equations in terms of the constrained
Lagrangian
\[ L_{c}(r^I,s^\alpha,{\dot r}^I) =
L(r^I,s^\alpha,{\dot r}^I, - A^\alpha_I {\dot r}^{I}).
\]
The equations of motion, now in terms of $L_{c}$, become
\begin{equation}\label{nonhol}
\left\{
\begin{array}{rcl}
{\dot s}^\alpha &\!\!\!=\!\!\!& - A^\alpha_I {\dot r}^{I},
\\[2mm] \displaystyle \frac{d}{dt}\Big(\fpd{L_c}{{\dot r}^I}\Big) &\!\!\!=\!\!\!&
\displaystyle \fpd{L_c}{r^I} - A^\alpha_I \fpd{L_c}{s^\alpha} -
{\dot r}^J B^\alpha_{IJ} \fpd{L}{{\dot s}^\alpha}.
\end{array}
\right.,
\end{equation}
where $\displaystyle B^\alpha_{IJ} = \partial_{r^J}{A^\alpha_I} -
\partial_{r^I} {A^\alpha_J} + A^\beta_I \partial_{s^\beta}{A^\alpha_J} -
A^\beta_J \partial_{s^\beta}{A^\alpha_I}$.

To illustrate this formulation, consider perhaps the simplest
example: a nonholonomically constrained free particle with unit mass
moving in $\mathbb{R}^{3}$ (more details can be found in \cite{Bl},
\cite{R}). In this example one has a free particle with Lagrangian
and constraint given by \begin{equation}\label{nhfp}
L=\frac{1}{2}\left(\dot{x}^{2} + \dot{y}^{2} +
\dot{z}^{2}\right),\qquad\dot{z} +x\dot{y}=0.
\end{equation}
We can form the constrained Lagrangian $L_{c}$ by substituting the
constraint into $L$, and proceed to compute the constrained
equations, which take the form \begin{equation}\label{nhp1} \ddot x
=0,\quad \ddot y = -\frac{x\dot x\dot y}{1+x^2},\quad \dot z =-x
\dot y.
\end{equation}

Another example of interest is the knife edge on a plane. It
corresponds physically to a blade with mass $m$ moving in the $xy$ plane at an
angle $\phi$ to the $x$-axis (see \cite{NF}). The Lagrangian and
constraints for the system are:
\begin{equation}
 L = \frac{1}{2}m(\dot{x}^{2}+\dot{y}^{2})+\frac{1}{2}J\dot{\phi}^{2},
 \qquad
 \dot{x}\sin(\phi) - \dot{y}\cos(\phi) = 0, \label{kep1}
\end{equation}
from which we obtain the constrained equations:
\[
\ddot \phi =0,\qquad \ddot x = -\tan(\phi) \dot\phi \dot x,\qquad
\dot y = \tan(\phi)\dot x.
\]


\section{Second-order dynamics associated to a class of nonholonomic systems}

Recall from the introduction that we wish to investigate how we can
associate a free Hamiltonian to a nonholonomic system. One way to do
that is to rephrase the question in the Lagrangian formalism and to
first investigate whether or not there exists a regular Lagrangian.
Then, by means of the Legendre transformation, we can easily
generate the sought after Hamiltonian. Rather than abstractly describing the various ways of associating a
second-order system to a given nonholonomic system though, we will instead illustrate the method by means
 of one of the most interesting examples of a nonholonomic system.


\subsection{Associated Second-Order Systems for the vertically rolling disk}

The vertical rolling disk is a homogeneous disk rolling without slipping on a horizontal plane, with configuration space $Q=\mathbb{R}^{2}\times S^{1} \times S^{1}$ and parameterized by the coordinates $(x,y,\theta,\varphi)$, where $(x,y)$ is the position of the center of mass of the disk, $\theta$ is the angle that a point fixed on the disk makes with respect to the vertical, and $\varphi$ is measured from the positive $x$-axis. The
system has the Lagrangian and constraints given by
\begin{eqnarray}
L  &=& \frac{1}{2}m(\dot{x}^{2} + \dot{y}^{2}) + \frac{1}{2}I\dot{\theta}^{2} + \frac{1}{2}J\dot{\varphi}^{2}, \nonumber \\
 \dot{x} &=& R\cos(\varphi)\dot{\theta}, \nonumber \\
 \dot{y} &= &R\sin(\varphi)\dot{\theta}, \label{vd1}
\end{eqnarray}
where $m$ is the mass of the disk, $R$ is its radius, and $I,J$ are
the moments of inertia about the axis perpendicular to the plane of
the disk, and about the axis in the plane of the disk, respectively.
The constrained equations of motion are simply:
\begin{equation}\label{VRD}
\ddot\theta=0,\quad \ddot\varphi=0,\quad \dot x = R\cos(\varphi)
\dot\theta,\quad \dot y = R\sin(\varphi) \dot\theta.
\end{equation}
The solutions of the first two equations are of course
\[
\theta(t)  = u_{\theta}t + \theta_{0},\qquad \varphi(t)  =
u_{\varphi}t + \varphi_{0},
\]
and in the case where $u_\varphi \neq 0$, we get that the $x$- and
$y$-solution is of the form
\begin{eqnarray}
x(t)  &=& \left(\frac{u_{\theta}}{u_{\varphi}}\right)R\sin(\varphi(t)) + x_0,\nonumber \\
y(t)  &=&
-\left(\frac{u_{\theta}}{u_{\varphi}}\right)R\cos(\varphi(t)) + y_0,
\label{vd2}
\end{eqnarray}
from which we can conclude that the disk follows a circular path. If
$u_\varphi=0$, we simply get the linear solutions
\begin{equation}
x(t)  = R\cos(\varphi_0)u_\theta t + x_0, \quad y(t)  =
R\sin(\varphi_0)u_\theta t + y_0. \label{vd22}
\end{equation}
The situation in (\ref{vd22}) corresponds to the case when $\varphi$
remains constant, i.e. when the disk is rolling along a straight
line. For much of what we will discuss in the next sections, we will
exclude these type of solutions from our framework for reasons we
discuss later.

Having introduced the vertical disk, let us take a closer look at
the nonholonomic equations of motion (\ref{VRD}). As a system of
ordinary differential equations, these equations form a mixed set of
coupled first- and second-order equations. It is well-known that
these equations are never variational on their own \cite{Bl,Co}, in
the sense that we can never find a regular Lagrangian whose
(unconstrained) Euler-Lagrange equations are equivalent to the
nonholonomic equations of motion (\ref{nonhol}) (although it may
still be possible to find a singular Lagrangian). There are,
however, infinitely many systems of second-order equations (only),
whose solution set contains the solutions of the nonholonomic
equations (\ref{nonhol}). We shall call these second-order systems
{\em associated second-order systems}, and in the next section will
wish to find out whether or not we can find a regular Lagrangian for
one of those associated second-order systems. If so, we can use the
Legendre transformation to get a full Hamiltonian system on the
associated phase space. On the other hand, the Legendre
transformation will also map the constraint distribution onto a
constraint submanifold in phase space. The nonholonomic solutions,
considered as particular solutions of the Hamiltonian system, will
then all lie on that submanifold.

There are infinitely many ways to arrive at an associated
second-order system for a given nonholonomic system. We shall
illustrating three choices below using the vertical rolling disk as
an example.

Consider, for example, taking the time derivative of the constraint
equations, so that a solution of the nonholonomic system (\ref{VRD})
also satisfies the following complete set of second-order
differential equations in all variables $(\theta,\varphi,x,y)$:
\begin{equation}\label{choice1VRD}
\ddot\theta=0,\quad \ddot\varphi=0,\quad \ddot x = -R\sin(\varphi)
\dot\theta\dot\varphi,\quad \ddot y = R\cos(\varphi)
\dot\theta\dot\varphi.
\end{equation}
We shall call this associated second-order system the {\em first
associated second-order system}. Excluding for a moment the case
where $u_\varphi=0$, the solutions of equations (\ref{choice1VRD})
can be written as
\begin{eqnarray*}
\theta(t)  &= & u_{\theta}t + \theta_{0}\\ \varphi(t)  &=&
u_{\varphi}t + \varphi_{0}\\
x(t)  &=& \left(\frac{u_{\theta}}{u_{\varphi}}\right)R\sin(\varphi(t)) + u_xt+ x_0,\nonumber \\
y(t)  &=&
-\left(\frac{u_{\theta}}{u_{\varphi}}\right)R\cos(\varphi(t)) +
u_yt+ y_0.
\end{eqnarray*}
By restricting the above solution set to those that also satisfy the
constraints $\dot x = \cos(\varphi)\dot\theta$ and $\dot y =
\sin(\varphi)\dot\theta$ (i.e. to those solutions above with
$u_x=u_y=0$), we get back the solutions (\ref{vd2}) of the
non-holonomic equations (\ref{VRD}). A similar reasoning holds for
the solutions of the form (\ref{vd22}). The question we then wish to
answer in the next section is whether the second-order equations
(\ref{choice1VRD}) are equivalent to the Euler-Lagrange equations of
some regular Lagrangian or not.

Now, taking note of the special structure of equations (\ref{choice1VRD}), we may use the
constraints (\ref{VRD}) to eliminate the $\dot{\theta}$ dependency. This yields another plausible choice for
an associated system:
\begin{equation}\label{choice2VRD}
\ddot\theta=0,\quad \ddot\varphi=0,\quad \ddot x =
-\frac{\sin(\varphi)}{\cos(\varphi)} \dot x\dot\varphi,\quad \ddot y
= \frac{\cos(\varphi)}{\sin(\varphi)} \dot y \dot\varphi.
\end{equation}
We shall refer to this choice later as the {\em second associated
second-order system}.

Lastly, we may simply note that, given that on the constraint
manifold the relation $\sin(\varphi)\dot x - \cos(\varphi)\dot y=0$
is satisfied, we can easily add a multiple of this relation to some
of the equations above. One way of doing so leads to the system
\begin{eqnarray}
J \ddot\varphi &=& -mR (\sin(\varphi)\dot x - \cos(\varphi)\dot y)
\dot \theta ,\nonumber\\ (I+mR^2)\ddot\theta &=& mR
(\sin(\varphi)\dot x - \cos(\varphi)\dot y) \dot \varphi,\nonumber\\
(I+mR^2)\ddot x & = & -R (I+mR^2)\sin(\varphi) \dot\theta\dot\varphi
 + mR^2 \cos(\varphi)
(\sin(\varphi)\dot x - \cos(\varphi)\dot y) \dot \varphi,\nonumber\\
(I+mR^2)\ddot y &=& R(I+mR^2)\cos(\varphi) \dot\theta\dot\varphi
 + mR^2 \sin(\varphi) (\sin(\varphi)\dot x -
\cos(\varphi)\dot y) \dot \varphi.\label{choice3VRD}
\end{eqnarray}
For later discussion we shall refer to it as the {\em third
associated second-order system}. We mention this particular
second-order system here because it has been shown in \cite{FB}
(using techniques that are different than those we will apply in
this paper) that this complicated looking system is indeed
variational! The Euler-Lagrange equations for the regular Lagrangian
\begin{equation}\label{LagFB}
L  = -\onehalf m ({\dot x}^2 + {\dot y}^2) + \onehalf I
{\dot\theta}^2 + \onehalf J {\dot\varphi}^2 +
mR\dot\theta(\cos(\varphi) \dot x+\sin(\varphi) \dot y),
\end{equation}
are indeed equivalent to equations (\ref{choice3VRD}), and, when
restricted to the constraint distribution, its solutions are exactly
those of the nonholonomic equations (\ref{VRD}). We shall have more
to say about this system in section 4.4 below.


\subsection{Associated Second-Order Systems in General}

We will, of course, not only be interested in the vertically rolling
disk. It should be clear by now that there is no systematic way to
catalogue the second-order systems that are associated to a
nonholonomic system. If no regular Lagrangian exists for one
associated system, it may still exist for one of the infinitely many
other associated systems. For many nonholonomic systems, the search
for a Lagrangian may therefore remain inconclusive. On the other
hand, also the solution of the inverse problem of any given
associated second-order system is too hard and too technical to
tackle in the full generality of the set-up of the section 2.
Instead, we aim here to concisely formulate our results for a
well-chosen class of nonholonomic systems which include the
aforementioned examples and for only a few choices of associated
second-order systems.

To be more precise, let us assume from now on that the configuration
space $Q$ is locally just the Euclidean space ${\mathbb R}^n$ and
that the base space of the fibre bundle is two dimensional, writing $(r_1,r_2;s_\alpha)$ for the coordinates. We will consider the class of nonholonomic systems where the Lagrangian is given by

\begin{equation}\label{LEucl}
L=\onehalf(I_1{\dot r_{1}}^2+I_2{\dot r_{2}}^2 + \sum_\alpha I_\alpha {\dot
s}_\alpha^2),
\end{equation}
(with all $I_\alpha$ positive constants) and where the constraints
take the following special form
\begin{equation}\label{constraints}
{\dot s}_\alpha =-A_\alpha(r_{1})\dot r_{2}.
\end{equation}
Although this may seem to be a very thorough simplification,  this
interesting class of systems does include, for example, all the
classical examples described above. We also remark that all of the
above systems fall in the category of so-called Chaplygin systems
(see \cite{Bl}). The case of 2-dimensional distributions was also
studied by Cartan, be it for other purposes (see e.g. \cite{Bryant}
and the references therein).

In what follows, we will assume that none of the $A_\alpha$ are
constant (in that case the constraints are, of course, holonomic).
The nonholonomic equations of motion (\ref{nonhol}) are now
\begin{equation}\label{NHeq}
\ddot r_{1} =0, \quad \ddot r_{2} = -N^2 \big(\sum_\beta I_\beta
A_\beta A'_\beta\big) \dot r_{1}\dot r_{2},\quad {\dot s}_\alpha
=-A_\alpha\dot r_{2},
\end{equation}
where $N$ is shorthand for the function
\begin{equation}\label{N} N(r_{1})=\frac{1}{\sqrt{I_2+\sum_\alpha
I_\alpha A_\alpha^2}}.
\end{equation}
This function is directly related to the invariant measure of the
system. Indeed, we have shown in \cite{FB} that for a two-degree of
freedom system such as (\ref{NHeq}), we may compute the density $N$
of the invariant measure (if it exists) by integrating two
first-order partial differential equations derived from the
condition
 that the volume form be preserved along the nonholonomic flow. In the present case, these two
 equations read:
\begin{equation} \frac{1}{N}\frac{\partial N}{\partial r_{1}} +
\frac{\sum_\beta I_\beta A_\beta A'_\beta}{I_2+\sum_\alpha I_\alpha
A_\alpha^2}=0, \qquad \frac{1}{N}\frac{\partial N}{\partial r_{2}} =
0, \label{im1}
\end{equation}
and obviously the expression for $N$ in (\ref{N}) is its solution up
to an irrelevant multiplicative constant. In case of the free
nonholonomic particle and the knife edge the invariant measure
density is $N\sim 1/\sqrt{1+x^2}$ and $N\sim
1/\sqrt{(1+\tan^2(\phi))} = \cos(\phi)$, respectively. In case of
the vertically rolling disk it is a constant. We shall see later
that systems with a constant invariant measure (or equivalently,
with constant $\sum_\alpha I_\alpha A_\alpha^2$) always play a
somehow special role.

We are now in a position to generalize the associated second-order systems presented in section 2.1 to the more general
class of nonholonomic systems above. In the set-up above, the first associated second-order system is, for the more general
systems (\ref{NHeq}), the system
\[ \ddot r_{1} =0, \quad \ddot r_{2} = -N^2 \big(\sum_\beta I_\beta A_\beta A'_\beta\big) \dot r_{1}\dot r_{2},\quad
{\ddot s}_\alpha = -(A'_{\alpha}\dot r_{1}\dot r_{2} + A_\alpha\ddot r_{2}),
\]
or equivalently, in normal form,
\begin{eqnarray}
\label{choice1}  &&\ddot r_{1} =0, \qquad \qquad \ddot r_{2} = -N^2
\big(\sum_\beta I_\beta A_\beta A'_\beta\big) \dot r_{1}\dot r_{2},\nonumber\\
&& {\ddot s}_\alpha = -\Big(A'_{\alpha} - N^2 A_\alpha
\big(\sum_\beta I_\beta A_\beta A'_\beta\big)\Big) \dot r_{1} \dot
r_{2}.
\end{eqnarray}
For convenience, we will often simply write
\[ \ddot r_{1} =0,\quad \ddot r_{2} =
\Gamma_2(r_{1})\dot r_{1}\dot r_{2},\quad {\ddot s}_\alpha =
\Gamma_\alpha(r_{1})\dot r_{1} \dot r_{2},
\]
for these types of second-order systems.

The second associated second-order system we encountered for the
vertically rolling disk also translates to the more general setting. We
get
\begin{eqnarray}
&& \ddot r_{1} =0, \qquad\qquad \ddot r_{2} = -N^2 \big(\sum_\beta I_\beta A_\beta A'_\beta\big) \dot r_{1}\dot r_{2},\nonumber\\
&& {\ddot s}_\alpha =  \Big(A'_{\alpha} - N^2 A_\alpha
\big(\sum_\beta I_\beta A_\beta A'_\beta\big)\Big) \dot r_{1}
\left(\frac{{\dot s}_\alpha}{A_\alpha}\right), \label{NHsode}
\end{eqnarray}
where in the right-hand side of the last equation, there is no sum
over $\alpha$. A convenient byproduct of this way of associating a
second-order system to (\ref{NHeq}) is that now all equations
decouple except for the coupling with the $r_{1}$-equation. To
highlight this, we will write this system as
\begin{equation} \ddot
r_{1} =0, \quad \ddot q_a = \Xi_a(r_{1}) \dot q_a\dot r_{1}
\nonumber
\end{equation}
(no sum over $a$) where, from now on, $(q_a)=(r_{2},s_\alpha)$ and
$(q_i)=(r_{1},q_a)$.

We postpone the discussion about the third associated second-order
system of our class until section 4.4.


\section{Lagrangians for associated second-order systems}


\subsection{The inverse problem of Lagrangian mechanics}

Let $Q$ be a manifold with local coordinates $(q^i)$ and assume we are given a
system of second-order ordinary differential equations ${\ddot
q}^i=f^i(q,\dot q)$ on $Q$. The search for a regular Lagrangian
 is known in the literature as `the inverse problem of the calculus of variations,' and has a
 long history (for a recent survey on this history, see e.g. \cite{KP} and the long list of references therein). In order for a regular Lagrangian
$L(q,\dot{q})$ to exist we must be able to find functions
$g_{ij}(q,\dot q)$, so-called multipliers, such that
\[
g_{ij}({\ddot q}^j-f^j) = \frac{d}{dt}\left( \fpd{L}{{\dot q}^i}
\right)-\fpd{L}{q^i}.
\]
It can be shown \cite{Anderson,Douglas,CV} that the multipliers must
satisfy
\begin{eqnarray*}
&&\det(g_{ij})\neq 0,\quad\quad g_{ji}=g_{ij},\quad\quad
\fpd{g_{ij}}{{\dot q}^k}=\fpd{g_{ik}}{{\dot q}^j};\\&&
\Gamma(g_{ij}) - \nabla^k_j g_{ik}- \nabla^k_i g_{kj}=0,\\ &&
g_{ik}\Phi^k_j = g_{jk}\Phi^k_i;
\end{eqnarray*}
where $\nabla^i_j = -\onehalf
\partial_{{\dot q}^j}f^i$ and
\[
\Phi^k_j = \Gamma\left(\partial_{{\dot
q}^j}{f^k}\right)-2\partial_{q^j}{f^k}-\onehalf\partial_{{\dot
q}^j}{f^l}\partial_{{\dot q}^l}{f^k}.
\]
The symbol $\Gamma$ stands for the vector field ${\dot
q}^i\partial_{q^i} + f^i
\partial_{{\dot q}^i}$ on $TQ$ that can naturally be associated to the system ${\ddot
q}^i=f^i(q,\dot q)$. Conversely, if one can find functions $g_{ij}$
satisfying these conditions then the equations $\ddot{q}^i=f^i$ are
derivable from a regular Lagrangian. Moreover, if a regular
Lagrangian $L$ can be found, then its Hessian $\spd{L}{{\dot
q}^i}{{\dot q}^j}$ is a multiplier.

The above conditions are generally referred to as the Helmholtz
conditions. We will fix from the start $g_{ij}=g_{ji}$ for $j\leq
i$, and we will simply write $g_{ijk}$ for $\partial_{{\dot
q}^k}{g_{ij}}$, and also assume the notation to be symmetric over
all its indices.

The Helmholtz conditions are a mixed set of coupled algebraic and
PDE conditions in $(g_{ij})$. We will refer to the penultimate
condition as the `$\nabla$- condition,' and to the last one as the
`$\Phi$-condition.' The algebraic $\Phi$-conditions are of course
the most interesting to start from. In fact, we can easily derive
more algebraic conditions (see e.g.\ \cite{Towards}). For example,
by taking a $\Gamma$-derivative of the $\Phi$-condition, and by
replacing $\Gamma(g_{ij})$ everywhere by means of the
$\nabla$-condition, we arrive at a new algebraic condition of
the form
\[
g_{ik}(\nabla\Phi)^k_j = g_{jk}(\nabla\Phi)^k_i,
\]
where $(\nabla\Phi)^i_j = \Gamma(\Phi^i_j)  -
\nabla^i_m\Phi^m_j-\nabla^m_j\Phi^i_m$. As in \cite{Towards}, we
will call this new condition the $(\nabla\Phi)$-condition. It will,
of course, only give new information as long as it is independent
from the $\Phi$-condition (this will not be the case, for example,
if the commutator of matrices $[\Phi,\nabla\Phi]$ vanishes). One can
repeat the above process on the $(\nabla\Phi)$-condition, and so on
to obtain possibly independent
$(\nabla\ldots\nabla\Phi)$-conditions.

A second route to additional algebraic conditions arises from the
derivatives of the $\Phi$-equation in $\dot q$-directions. One can
sum up those derived relations in such a way that the terms in
 $g_{ijk}$ disappear on account of the symmetry in all their indices.
The new algebraic relation in $g_{ij}$ is then of the form
\[
g_{ij}R^j_{kl} + g_{lj}R^j_{ik} + g_{kj}R^j_{li}= 0,
\]
where $R^j_{kl}= \partial_{{\dot q}^j}(\Phi^k_i)-
\partial_{{\dot q}^i}(\Phi^k_j)$. For future use, we will call this the $R$-condition.

As before, this process can
 be continued to obtain more algebraic conditions. Also, any mixture of
the above mentioned two processes leads to possibly new and
independent algebraic conditions. Once we have used up all the
information that we can obtain from this infinite series of
algebraic conditions, we can start looking at the partial
differential equations in the $\nabla$-conditions.

We are now in a position to investigate whether a Lagrangian exists for the
two choices of associated systems (\ref{choice1}) and (\ref{NHsode}).


\subsection{Lagrangians for the first associated second-order system}

The first second-order system of interest is of the form
\begin{equation}\label{system1}
\ddot r_{1} =0,\quad \ddot r_{2} = \Gamma_2(r_{1})\dot r_{1}\dot r_{2},\quad {\ddot
s}_\alpha = \Gamma_\alpha(r_{1})\dot r_{1} \dot r_{2}.
\end{equation}
The only non-zero components of $(\Phi^i_j)$ are
\begin{eqnarray*}
&&\Phi^2_1 = (\onehalf \Gamma_2^2 - \Gamma'_2) \dot r_{1} \dot r_{2}, \qquad
\Phi^2_2 = - (\onehalf \Gamma_2^2 - \Gamma'_2) {\dot r_{1}}^2,\\ &&
 \Phi^\alpha_1 = (\onehalf \Gamma_\alpha \Gamma_2 - \Gamma_\alpha') \dot r_{1} \dot r_{2} ,\qquad \Phi^\alpha_2 =
 - (\onehalf \Gamma_\alpha \Gamma_2 - \Gamma_\alpha') {\dot r_{1}}^2.
\end{eqnarray*}
For $\nabla\Phi$ and $\nabla\nabla\Phi$ we get
\begin{eqnarray*} &&
(\nabla\Phi)^2_1 = (\Gamma_2\Gamma_2'-\Gamma_2'') {\dot r_{1}}^2 \dot
r_{2},\qquad (\nabla\Phi)^2_2 = - (\Gamma_2\Gamma_2'-\Gamma_2'') {\dot
 r_{1}}^3,\\&&
 (\nabla\Phi)^\alpha_1 = (\Gamma_\alpha\Gamma_2 -\Gamma_\alpha'') {\dot r_{1}}^2 \dot r_{2}, \qquad (\nabla\Phi)^\alpha_2 = - (\Gamma_\alpha\Gamma_2 -\Gamma_\alpha'') {\dot
 r_{1}}^3,
\end{eqnarray*}
and
\begin{eqnarray*}
(\nabla\nabla\Phi)^2_1 &= &
((\Gamma_2')^2+\Gamma_2\Gamma_2''-\Gamma_2''') {\dot r_{1}}^3 \dot r_{2},\\
(\nabla\nabla\Phi)^2_2 &= & -
((\Gamma_2')^2+\Gamma_2\Gamma_2''-\Gamma_2''') {\dot
 r_{1}}^4\\
 (\nabla\nabla\Phi)^\alpha_1 &= & (\Gamma_\alpha'\Gamma_2' + \frac{3}{2} \Gamma_\alpha\Gamma''_2 - \frac{1}{2} \Gamma_\alpha''\Gamma_2- \Gamma_\alpha''') {\dot r_{1}}^3 \dot r_{2},
 \\ (\nabla\nabla\Phi)^\alpha_2 &= & - (\Gamma_\alpha'\Gamma_2' + \frac{3}{2} \Gamma_\alpha\Gamma''_2 - \frac{1}{2} \Gamma_\alpha''\Gamma_2- \Gamma_\alpha''') {\dot
 r_{1}}^4,
\end{eqnarray*}
and so on.

We can already draw some immediate consequences just by looking at
the above explicit expressions. Let's make things a bit more
accessible by considering the case where the dimension is 4. Then,
the $\Phi$-equations of the system (\ref{system1}) and their
derivatives are all of the form
\begin{eqnarray}&& g_{12}\Psi^2_2 +
g_{13}\Psi^3_2 + g_{14}\Psi^4_2 = g_{22}\Psi^2_1 + g_{23}\Psi^3_1 +
g_{24}\Psi^4_1, \nonumber
\\&&g_{23}\Psi^2_1 + g_{33}\Psi^3_1 + g_{34}\Psi^4_1 =0,\nonumber
\\&& g_{23}\Psi^2_2 + g_{33}\Psi^3_2 +
g_{34}\Psi^4_2 =0,\label{Psi}
\\&& g_{24}\Psi^2_1 + g_{34}\Psi^3_1 +
g_{44}\Psi^4_1 =0,\nonumber
\\&& g_{24}\Psi^2_2 + g_{34}\Psi^3_2 +
g_{44}\Psi^4_2 =0,\nonumber
\end{eqnarray}
where, within the same equation, $\Psi$ stands for either $\Phi$,
$\nabla\Phi$, $\nabla\nabla\Phi$, $\nabla\nabla\nabla\Phi$, ... We
will refer to the equations of the first line in (\ref{Psi}) as
`equations of the first type,' and to equations of the next four
lines as `equations of the second type.' The first 3 equations of
the first type, namely those for $\Phi$, $\nabla\Phi$ and
$\nabla\nabla\Phi$ are explicitly:
\begin{eqnarray}&& g_{12}\Phi^2_2 +
g_{13}\Phi^3_2 + g_{14}\Phi^4_2 = g_{22}\Phi^2_1 + g_{23}\Phi^3_1 +
g_{24}\Phi^4_1,\nonumber\\&& g_{12}(\nabla\Phi)^2_2 +
g_{13}(\nabla\Phi)^3_2 + g_{14}(\nabla\Phi)^4_2
=g_{22}(\nabla\Phi)^2_1+ g_{23}(\nabla\Phi)^3_1 +
g_{24}(\nabla\Phi)^4_1,\label{pnpnnp}\\ &&
g_{12}(\nabla\nabla\Phi)^2_2 + g_{13}(\nabla\nabla\Phi)^3_2 +
g_{14}(\nabla\nabla\Phi)^4_2 = g_{22}(\nabla\nabla\Phi)^2_1 +
g_{23}(\nabla\nabla\Phi)^3_1 +
g_{24}(\nabla\nabla\Phi)^4_1.\nonumber
\end{eqnarray}
For the systems at hand, the particular expression of $\Phi$ and its
derivatives are such that
\begin{eqnarray*}
&&\Phi^2_2 (\nabla\Phi)^2_1 - \Phi^2_1 (\nabla\Phi)^2_2 =0,\\&&
(\nabla\Phi)^2_2 (\nabla\nabla\Phi)^2_1 - (\nabla\Phi)^2_1
(\nabla\nabla\Phi)^2_2 =0,
\end{eqnarray*}
and so on. By taking the appropriate linear combination of the first
and the second, and of the second and the third equation in
(\ref{pnpnnp}), we can therefore obtain two equations in which the
unknowns $g_{12}$ and $g_{22}$ are eliminated. Moreover, under
certain regularity conditions, these two equations can be solved for
$g_{13}$ and $g_{14}$ in terms of $g_{23}$ and $g_{24}$ (we will
deal with exceptions later on). So, if we can show that $g_{23}$ and
$g_{24}$ both vanish, then so will also $g_{13}$ and $g_{14}$. Then,
in that case $g_{12}\Psi^2_2 = g_{22}\Psi^2_1$, but no further
relation between $g_{12}$ and $g_{22}$ can be derived from this type
of algebraic conditions.

The infinite series of equations given by those of the second type
in (\ref{Psi}) are all equations in the 5 unknowns $g_{23}$,
$g_{33}$, $g_{34}$, $g_{24}$ and $g_{44}$. Not all of these
equations are linearly independent, however. In fact, given that the
system (\ref{system1}) exhibits the property
\[
\Psi^a_1\Psi^b_2 - \Psi^b_1\Psi^a_2 =0,
\]
(where $\Psi$ is one of $\Phi,\nabla\Phi,\nabla\nabla\Phi,...$), one
can easily deduce that the last four lines of equations in
(\ref{Psi}) actually reduce to only two kinds of equations. If we
assume that we can find among this infinite set 5 linearly
independent equations, there will only be the zero solution
\[
g_{23}=g_{33}=g_{34}=g_{24}=g_{44}=0,
\]
and from the previous paragraph we know that then also
$g_{14}=g_{13}=0$. To conclude, under the above mentioned
assumptions, the matrix of multipliers
\[
(g_{ij})=\left(\begin{array}{cccc} g_{11} & g_{12} & 0& 0
\\ g_{12} & g_{22} & 0 &0 \\ 0 & 0 & 0 & 0\\ 0 & 0 & 0 & 0
\end{array} \right)
\]
is singular and we conclude that there is no regular Lagrangian for
the system. The above reasoning can, of course, be generalized to
lower and higher dimensions.

We will refer to the above as `the general case'. The assumptions
made above are, however, not always satisfied, and they need to be
checked for every particular example. Let us consider first the
example of the (three-dimensional) nonholonomic particle, where
$\Gamma_2 = -x /(1+x^2)$ and $\Gamma_3 = -1/(1+x^2)$. The equations
for $\Psi=\Phi,\nabla\Phi$ of the second type give the following two
linear independent equations
\[
({\dot x}^2-2)g_{23} + 3x g_{33}=0,\quad (x^3 - 5x)g_{23} +
(5x^2-1)g_{33}=0.
\]
We can easily conclude that $g_{23}=g_{33}=0$. With that, the first
two equations of the first type are
\begin{eqnarray*}
&& (x^2-2)\dot x g_{12} + 3 x \dot x g_{13} + (x^2-2)\dot y
g_{22}=0,\\ && (x^3-5x)\dot x g_{12} + (5x^2-1) \dot x g_{13} +
(x^3-5x)\dot y g_{22}.
\end{eqnarray*}
From this $g_{13}=0$ and $\dot x g_{12}=-\dot y g_{22}$, and there
is therefore no regular Lagrangian.

With a similar reasoning (but with different coefficients) we reach
the same conclusion for the example of the knife edge on a plane.

The vertically rolling disk is a special case, however, and so is
any system (\ref{NHeq}) with the property that $\sum_\alpha I_\alpha
A_\alpha^2$ is a constant. This last relation is in fact equivalent
with the geometric assumption that the density of the invariant
measure $N$ is constant. In that case, we get $\Gamma_2=0$. Not only
does $\Gamma_2$ vanish, but so do all $\Psi^2_1$ and $\Psi^2_2$  for
$\Psi=\Phi,\nabla\Phi,...$. We also have $\Gamma_3 =
-R\sin(\varphi)$ and $\Gamma_4 = R\cos(\varphi)$. Moreover, looking
again first at expressions (\ref{pnpnnp}), one can easily show that
for the vertically rolling disk these three equations, and in fact
any of the equations that follow in that series, are all linearly
depending on the following two equations
\begin{eqnarray*}
&&\cos(\varphi) {\dot \varphi} g_{13} + \sin(\varphi) \dot \varphi
g_{14} + \cos(\varphi) \dot \theta g_{23} + \sin(\varphi)\dot \theta
g_{24} = 0,
\\&& \sin(\varphi) {\dot \varphi} g_{13} - \cos(\varphi) \dot \varphi g_{14} + \sin(\varphi) \dot
\theta g_{23} - \cos(\varphi)\dot \theta g_{24} = 0.
\end{eqnarray*}
Although these equations are already in a form where $g_{12}$ and
$g_{22}$ do not show up, it is quite inconvenient that there is no
way to relate these two unknowns to any of the other unknowns.
However, as in the general case, we can deduce from this an
expression for $g_{13}$ and $g_{14}$ as a function of $g_{23}$ and
$g_{24}$. We get
\begin{equation} g_{13} = -\frac{\dot \theta}{\dot
\varphi}g_{23},\qquad g_{14} = -\frac{\dot \theta}{\dot
\varphi}g_{24}.\label{help}
\end{equation}

The infinite series of equations of the second type (i.e.\ the last
four lines in (\ref{Psi})) are all linearly dependent to either one of
the following four equations
\begin{eqnarray*}
&&\cos(\varphi)g_{33} + \sin(\varphi) g_{34}=0,\quad
\cos(\varphi)g_{34} + \sin(\varphi) g_{44}=0\\ &&
\sin(\varphi)g_{33} -\cos(\varphi) g_{34}=0,\quad
\sin(\varphi)g_{34} - \cos(\varphi) g_{44}=0,
\end{eqnarray*}
from which $g_{33}=g_{34}=g_{44}=0$ follows immediately. In
comparison to the general case, however, we can no longer conclude
from the above that also $g_{23}$ and $g_{24}$ vanish, and
therefore, we can also not conclude from (\ref{help}) that $g_{13}$
and $g_{14}$ vanish. This concludes, in fact, the information we can
extract from the $\Phi$-condition, and the algebraic conditions that
follow from taking its derivatives w.r.t.\ $\nabla$. Also, any
attempt to create new algebraic conditions by means of the tensor
$R$ is fruitless, since an easy calculation shows that, when the
above conclusions are taken already into account, all equations that
can be derived from $R$ are automatically satisfied. However, we
have enough information to conclude that there does not exist a
regular Lagrangian for the vertically rolling disk and its first
associated second-order system. Indeed, the determinant of the
multiplier matrix
\[
(g_{ij})=\left(\begin{array}{cccc} g_{11} & g_{12} & \lambda g_{23}
& \lambda g_{24}
\\ g_{12} & g_{22} & g_{23} & g_{24} \\ \lambda g_{23} & g_{23} & 0 & 0\\ \lambda g_{24} & g_{24} & 0 & 0
\end{array} \right),
\]
(with $\lambda=-\dot\theta/\dot\varphi$) clearly vanishes and this
is a violation of one of the first Helmholtz conditions.

Thus, to summarize the above results, for the nonholonomic free
particle (\ref{nhfp}), the knife edge on the plane (\ref{kep1}) and
the vertically rolling disk (\ref{vd1}), we conclude that there does
not exist a regular Lagrangian for their first associated
second-order system (\ref{choice1}).

\subsection{Lagrangians for the second associated second-order system}

In this section, we will investigate the inverse problem for the
second associated system,
\begin{equation} \label{sode}
\ddot r_{1} =0, \qquad \ddot q_a = \Xi_a(r_{1}) \dot q_a\dot r_{1}.
\end{equation}
In the $q_a$-equations, there is no sum over $a$, which is an index
that runs from $2$ to the dimension of the configuration space, and
with respect to the formulation of the inverse problem in section 3,
we have $f_{1}= 0$ and $f_a= \Xi_a{\dot q}_a\dot r_{1}$. Moreover,
one can easily compute that the only non-vanishing components of
$\Phi$ are now
\[
\Phi^a_{{1}} = - \onehalf \dot r_{1} {\dot q}_a (2 \Xi'_a -
\Xi_a^2),\quad \Phi^a_a= \onehalf {\dot r_{1}}^2(2 \Xi'_a -
\Xi_a^2).
\]
The $\Phi$-conditions turn out to be quite simple: if $\Phi^a_a\neq
0$, then
\begin{equation}\label{alg1}
{\dot q}_a g_{aa}=-\dot r_{1} g_{{1}a},
\end{equation}
 and if $\Phi^a_a\neq
\Phi^b_b$ for $a\neq b$, then \begin{equation}\label{alg2} g_{ab}=0.
\end{equation}
 These
restrictions on $\Phi$ lead to the assumptions that first $\Xi_a
\neq 0$ and $\Xi_a\neq 2/(C-r_{1})$, where $C$ is any constant,
second that $\Xi_a\neq\Xi_b$ and, formally, $\Xi_a-\Xi_b\neq
E_b/(C-\int E_b dr_{1} )$, where $E_b(r_{1})=\exp(\int 2 \Xi_b
dr_{1})$. Suppose for now that we are dealing with nonholonomic
systems (\ref{NHsode}) where this is the case. Then one can easily
show that all the other $\nabla\ldots\nabla\Phi$-conditions do not
contribute any new information, as well as that the $R$-condition is
automatically satisfied. Thus we should therefore turn our attention
to the $\nabla$-condition, which is a PDE. To simplify the
subsequent analysis though, we note that although the multipliers
$g_{ij}$ can in general be functions of all variables
$(r_{1},q_a,\dot r_{1},{\dot q}_a)$, in view of the symmetry of the
system we shall assume them to be, without loss of generality,
functions of $(r_{1},\dot r_{1},{\dot q}_a)$ only.

Now, by differentiating the algebraic conditions by $r_{1}$, $\dot r_{1}$ and
${\dot q}_a$, we get the additional conditions
\begin{eqnarray*} &&
{\dot q}_a g'_{aa}=-\dot r_{1} g'_{1a}\\ && g_{aa} + {\dot q}_a
g_{aaa} =-\dot r_{1} g_{1aa}, \quad {\dot q}_a g_{1aa}=-g_{1a}-\dot
r_{1} g_{11a}
\\ && g_{aab} = 0 = g_{{1}ab}, \quad \mbox{if $a\neq b$.}
\end{eqnarray*}
Finally, the $\nabla$-Helmholtz conditions are, with the above
already incorporated,
\begin{eqnarray*}&&
g'_{{1}{1}} + \sum_b \Xi_b( g_{{1}{1}b} {\dot q}_b   - g_{bb}
 \frac{{\dot q}_b^2}{{\dot r_{1}}^2}) =0,\\
&& g'_{aa} + \Xi_{a} (g_{aaa}{\dot q}_a + g_{aa})  =0.
 \end{eqnarray*}
In what follows we will implicitly assume everywhere that $\dot
r_{1}\neq 0$. As a consequence, the multipliers $(g_{ij})$ (and the
Lagrangians we may derive from it) will only be defined for $\dot
r_{1}\neq 0$

It is quite impossible to find the most general solution for
$(g_{ij})$ though. We will show that there is an interesting class
of solutions if we make the anszatz that $g_{bbb}=0$ for all $b$.
With that and with the above $g_{aab}=0$ in mind, we conclude that
all such $g_{bb}$ will depend only on possibly $r_{1}$ and $\dot
r_{1}$. Moreover, from the last $\nabla$-conditions we can determine
their dependency on the variable $r_{1}$. Since now
\[
g'_{bb} + g_{bb} {\Xi_{b}} =0,\] it follows that $g_{bb}(r_{1},\dot
r_{1}) = F_b(\dot r_{1})\exp(-\xi_b(r_{1}))$, where $\xi_b$ is such
that $\xi'_b=\Xi_b$ and where $F_{b}(\dot r_{1})$ is still to be
determined from the remaining conditions. From one of the above
conditions we get $g_{{1}bb}=-g_{bb}/\dot r_{1}$ (since
$g_{bbb}=0$), so
\[ \frac{dF_b}{d\dot r_{1}} = - \frac{F_b}{\dot
r_{1}},
\]
from which $F_b=C_b/\dot r_{1}$, with $C_b$ a constant, and thus
$g_{bb}= C_b \exp(-\xi_b)/ \dot r_{1}$. Therefore, from the
algebraic conditions, $g_{{1}b} = - (g_{bb}/\dot r_{1}){\dot q}_b =
- C_b \exp(-\xi_b) {\dot q}_b/ {\dot r_{1}}^2$, and thus
$g_{{1}{1}b}= 2 C_b {\dot q}_b \exp(-\xi_b)/ {\dot r_{1}}^3$. With
this, the first $\nabla$-condition becomes
\[
g'_{{1}{1}} +  \sum_b C_b \exp(-\xi_b) \xi'_b \frac{{\dot
q}_b^2}{{\dot r_{1}}^3} =0,
\]
and thus
\[
g_{{1}{1}} = \sum_b C_b \exp(-\xi_b)  \frac{{\dot q}_b^2}{{\dot
r_{1}}^3} + C(\dot r_{1},{\dot q}_b).
\]
Given that $g_{{1}{1}b}= 2 C_b {\dot q}_b \exp(-\xi_b)/ {\dot
r_{1}}^3$, we can now determine the ${\dot q}_b$-dependence of $C$.
We simply get
\[
g_{{1}{1}} = \sum_b C_b \exp(-\xi_b)  \frac{{\dot q}_b^2}{{\dot
r_{1}}^3} + F_{{1}}(\dot r_{1}).
\]
Notice that $g_{{1}{1}{1}}$ does not show up explicitly in the
conditions or in the derived conditions. Therefore, there will
always be some freedom in the $g_{{1}{1}}$-part of the Hessian,
represented here by the undetermined function $F_{{1}}({\dot
r}_{1})$.

Up to a total time derivative, the most general Lagrangian whose
Hessian $g_{ij} = \spd{L}{{\dot q}^i}{{\dot q}^j}$ is the above
multiplier, is:
\begin{equation}\label{Lag1}
L = \rho(\dot r_{1}) + \onehalf \sum_b C_b \exp(-\xi_b) \frac{{\dot
q}_b^2}{\dot r_{1}},
\end{equation}
where $d^2\rho/d{\dot r_{1}}^2=F_{{1}}$. One can easily check that
the Lagrangian is regular, as long as $d^2\rho/d{\dot r_{1}}^2$ is
not zero, and as long as none of the $C_b$ are zero. Remark,
finally, that the Lagrangian is only defined on the whole tangent
space if $C_b=0$ (and $\rho$ is at least $C^2$ everywhere). We can
therefore only conclude that there is a regular Lagrangian (with the
ansatz $g_{bbb}=0$) on that part of the tangent manifold where $\dot
r_{1}\neq 0$. As a consequence, the solution set of the
Euler-Lagrange equations of the Lagrangian (\ref{Lag1}) will not
include those solutions of the second-order system (\ref{system1})
where $\dot r_{1} = 0$. In case of the vertically rolling disk, for
example, these solutions are exactly the special ones given by
(\ref{vd2}), and that is the reason why we will exclude them from
our formalism.

Recall that at the beginning of this section, we have made the
assumptions that $\Phi^a_a\neq 0$ and $\Phi^a_a\neq \Phi^b_b$.
Suppose now that one of these assumptions is not valid, say
$\Xi_2=0$ and therefore $\Phi^2_2=0$. Then, among the algebraic
Helmholtz conditions there will no longer be a relation in
(\ref{alg1}) that links $g_{22}$ to $g_{{1}2}$. In fact, since the
$g_{ij}$ now need to satisfy only a smaller number of algebraic
conditions, the set of possible Lagrangians may be larger. We can,
of course, still take the relation
\begin{equation}\label{ans1} {\dot q}_2 g_{22} = -\dot
r_{1} g_{{1}2} \end{equation}
 as an extra ansatz
(rather than as a condition) and see whether there exists
Lagrangians with that property. By following the same reasoning as
before, we easily conclude that the function (\ref{Lag1}) is also a
Lagrangian for systems with $\Phi^2_2=0$. In fact, it will be a
Lagrangian if any of the assumptions is not valid.

Apart from (\ref{ans1}), we are, of course, free to take any other
ansatz on $g_{12}$ and $g_{22}$. If we simply set
\[
g_{{1}2}=0,
\]
 it can easily be verified that also
\begin{equation}\label{Lag2}
L = \rho({\dot r}_1) + \sigma({\dot r}_2)+\onehalf\left(
\sum_{\alpha} a_\alpha \exp(-\xi_\alpha) \frac{{\dot
s}_\alpha^2}{{\dot r}_1}\right)
\end{equation}
 is a Lagrangian for a system
(\ref{sode}) with $\Xi_2=0$ (where, as usual,
$(q_a)=(r_2,s_\alpha)$). It is regular as long as both
$d^2\rho/d{\dot r}_1^2$ and $d^2\sigma/d{\dot r}_2^2$ do not vanish.

\begin{propo} The function
 \begin{equation}\label{L2as} L = \rho(\dot
r_{1}) + \frac{1}{2N} \left( C_2 \frac{{\dot r_{2}}^2}{\dot r_{1}}
+\sum_\beta C_\beta \frac{{\dot s}_\beta^2}{A_\beta \dot
r_{1}}\right),
\end{equation}
with $d^2\rho/d{\dot r_{1}}^2\neq 0$ and all $C_\alpha\neq 0$ is in
any case a regular Lagrangian for the second associated systems
(\ref{NHsode}). If the invariant measure density $N$ is a constant,
then also
 \begin{equation}\label{L2as2} L = \rho(\dot
r_{1}) + \sigma(\dot r_{2})+ \frac{1}{2 N} \sum_\beta a_\beta
\frac{{\dot s}_\beta^2}{A_\beta \dot r_{1}},
\end{equation}
where $d^2\rho/d{\dot r_{1}}^2\neq 0$, $d^2\sigma/d{\dot
r_{1}}^2\neq 0$ and all $C_\alpha\neq 0$ is a regular Lagrangian for
the second associated systems (\ref{NHsode}).
\end{propo}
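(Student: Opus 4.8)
The plan is to recognize that the systems (\ref{NHsode}) already have exactly the shape (\ref{sode}) analysed in Section 4.3, and that the primitives $\xi_a$ appearing in the general Lagrangians (\ref{Lag1}) and (\ref{Lag2}) can be written down explicitly in terms of the invariant-measure density $N$. Writing $(q_a)=(r_{2},s_\alpha)$, I would first read off from (\ref{NHsode}) that $\ddot q_a=\Xi_a(r_{1})\dot q_a\dot r_{1}$ with $\Xi_{r_{2}}=-N^2\sum_\beta I_\beta A_\beta A'_\beta$ and $\Xi_{s_\alpha}=A'_\alpha/A_\alpha-N^2\sum_\beta I_\beta A_\beta A'_\beta$. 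Differentiating the definition (\ref{N}) of $N$ — this is just the first equation of (\ref{im1}) — gives $(\ln N)'=-N^2\sum_\beta I_\beta A_\beta A'_\beta$, so $\Xi_{r_{2}}=(\ln N)'$ and $\Xi_{s_\alpha}=(\ln(A_\alpha N))'$. Hence one may take $\xi_{r_{2}}=\ln N$ and $\xi_{s_\alpha}=\ln(A_\alpha N)$ (the additive constant, and in particular the $\ln N$ contribution when $N$ is constant, being irrelevant), i.e. $\exp(-\xi_{r_{2}})=1/N$ and $\exp(-\xi_{s_\alpha})=1/(A_\alpha N)$. Substituting these into the general Lagrangian (\ref{Lag1}) yields precisely (\ref{L2as}), and substituting into (\ref{Lag2}) yields (\ref{L2as2}).

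Because the proposition asserts that (\ref{L2as}) is a Lagrangian \emph{in any case} — that is, also when the regularity hypotheses on the $\Xi_a$ used in Section 4.3 fail — I would not lean on the derivation of (\ref{Lag1}) but instead verify the claim directly. Setting $h_a(r_{1})=C_a\exp(-\xi_a)$, so that $h'_a/h_a=-\Xi_a$, the Euler--Lagrange equations of (\ref{L2as}) in the variables $q_a$ are $\frac{d}{dt}\big(h_a\dot q_a/\dot r_{1}\big)=0$, which expand to $\ddot q_a=\Xi_a\dot q_a\dot r_{1}+\dot q_a\ddot r_{1}/\dot r_{1}$. Inserting these into the $r_{1}$-equation causes every term to cancel except $\rho''(\dot r_{1})\ddot r_{1}$, so $\ddot r_{1}=0$ (as $\rho''\neq0$) and hence $\ddot q_a=\Xi_a\dot q_a\dot r_{1}$; these are exactly equations (\ref{NHsode}). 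For regularity, the Hessian of (\ref{L2as}) is bordered-diagonal and its determinant works out to $\rho''(\dot r_{1})\prod_a\big(h_a/\dot r_{1}\big)$, which is nonzero on the open set $\dot r_{1}\neq0$ provided $\rho''\neq0$ and all $C_\alpha\neq0$ (one stays, as always, in the region where the $A_\alpha$ are nonvanishing). This settles the first statement.

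For the second statement I would note that $N$ constant is equivalent to $\sum_\beta I_\beta A_\beta A'_\beta=0$, so that $\Xi_{r_{2}}=0$ and the $r_{2}$-equation of (\ref{NHsode}) degenerates to the free equation $\ddot r_{2}=0$, of the same form as the $r_{1}$-equation. One is then allowed to drop the $\dot r_{2}^2/\dot r_{1}$ term and replace it by an arbitrary function $\sigma(\dot r_{2})$ — this is exactly the alternative ansatz $g_{12}=0$ of Section 4.3 leading to (\ref{Lag2}) — which after inserting $\exp(-\xi_{s_\alpha})=1/(A_\alpha N)$ produces (\ref{L2as2}). The same computation as above shows its Euler--Lagrange equations reproduce (\ref{NHsode}) (now with $\Xi_{s_\alpha}=A'_\alpha/A_\alpha$), and its Hessian determinant equals $\sigma''(\dot r_{2})\rho''(\dot r_{1})\prod_\alpha\big(h_\alpha/\dot r_{1}\big)$, nonzero on $\dot r_{1}\neq0$ as long as $\rho''$, $\sigma''$ and all $a_\alpha$ are nonzero.

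None of this is hard; the two places that need genuine care are the identification in the first paragraph — spotting, via (\ref{im1})/(\ref{N}), that $\exp(-\xi_a)$ collapses to $1/N$ and $1/(A_\alpha N)$, which is what makes the closed forms (\ref{L2as}) and (\ref{L2as2}) appear in the first place — and, in the direct verification, checking that the $r_{1}$-component of the Euler--Lagrange equations really does close up to $\ddot r_{1}=0$ once the remaining equations are fed in, while keeping track of the domain $\dot r_{1}\neq0$ (and $A_\alpha\neq0$) on which the Lagrangians are regular.
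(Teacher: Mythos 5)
Your proposal is correct, and its first paragraph is precisely the paper's own proof: the paper simply observes that $\xi_{2}=\ln N$ and $\xi_\alpha=\ln(N A_\alpha)$ satisfy $\xi'_a=\Xi_a$ for the system (\ref{NHsode}), so that (\ref{L2as}) coincides with (\ref{Lag1}), and that when $N$ is constant one has $\Xi_2=0$, so (\ref{L2as2}) coincides with (\ref{Lag2}). Where you genuinely diverge is in how the words ``in any case'' are discharged: the paper leans on the inverse-problem analysis of Section 4.3 together with the remark just before the proposition (imposing (\ref{ans1}) as an ansatz rather than a condition, so that (\ref{Lag1}) remains a Lagrangian ``if any of the assumptions is not valid''), whereas you bypass the Helmholtz/multiplier machinery entirely and verify directly that, with $h_a=C_a\exp(-\xi_a)$, the Euler--Lagrange equations of (\ref{L2as}) reduce to $\frac{d}{dt}\bigl(h_a\dot q_a/\dot r_1\bigr)=0$ together with a cancellation in the $r_1$-equation leaving $\rho''\ddot r_1=0$, and that the bordered-diagonal Hessian has determinant $\rho''\prod_a\bigl(h_a/\dot r_1\bigr)$ (respectively $\sigma''\rho''\prod_\alpha\bigl(h_\alpha/\dot r_1\bigr)$ for (\ref{L2as2})). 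I checked these computations and they are correct. Your route buys a self-contained justification of the ``in any case'' clause, independent of the regularity assumptions $\Xi_a\neq0$, $\Xi_a\neq\Xi_b$ made in Section 4.3, and it makes explicit the domain restrictions $\dot r_1\neq0$ and $A_\alpha\neq0$ that the paper leaves implicit; the cost is redoing by hand a verification that the multiplier analysis already guarantees, which is why the paper's proof can be so short.
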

\begin{proof} For the second associated systems, the second-order
equations (\ref{sode}) are of the form (\ref{NHsode}). One easily
verifies that in that case
\begin{equation}\label{nonholxi}
\xi_{{2}} = \ln N \quad \mbox{and} \quad \xi_\alpha = \ln(N
A_\alpha)
\end{equation}
are such that $\xi'_a=\Xi_a$. The first Lagrangian in the theorem is
then equal to the one in (\ref{Lag1}). For a system with constant
invariant measure $N$, we get that $\Xi_2=0$. Therefore, also the
function (\ref{Lag2}) is a valid Lagrangian.  \end{proof}

Let us end this section with a list of the Lagrangians for the
nonholonomic free particle, the knife edge on a horizontal plane and
the vertically rolling disk. The respective Lagrangians (\ref{L2as})
for the first two examples are:
\begin{equation}\label{nhpL2as} L =
\rho(\dot x) + \onehalf \sqrt{1+x^2} \left( C_2 \frac{{\dot
y}^2}{\dot x} +C_3\frac{{\dot z}^2}{x \dot x }
  \right),
\end{equation}
and
\begin{eqnarray}
L &=& \rho(\dot \phi) + \onehalf\sqrt{m(1+\tan(\phi)^2)} \left(C_2
\frac{{\dot x}^2}{\dot \phi} +C_3\frac{{\dot y}^2}{\tan(\phi) \dot
\phi }\right), \nonumber \\
&=& \rho(\dot \phi) +\onehalf C_2 \sqrt{m} \frac{{\dot
x}^2}{\cos(\phi) \dot \phi} +\onehalf  C_3 \sqrt{m}\frac{{\dot
y}^2}{\sin(\phi) \dot \phi }. \label{keL2as}
\end{eqnarray}
The vertically rolling disk in one of those systems with constant
invariant measure. The first Lagrangian (\ref{L2as}) is:
\begin{equation}\label{vdL2as1} L= \rho(\dot \varphi) + \frac{\sqrt{I +
mR^2}}{2} \left (C_2 \frac{{\dot \theta}^2}{\dot \varphi}
+C_3\frac{{\dot x}^2}{\cos(\varphi) \dot \varphi} + C_4 \frac{{\dot
y}^2}{\sin(\varphi) \dot \varphi}\right)
\end{equation}
and the second Lagrangian (\ref{L2as2}) is:
\begin{equation}\label{vdL2as2} L= \rho(\dot
\varphi) + \sigma({\dot \theta})
-\frac{\sqrt{I+mR^2}}{2}\left(a_3\frac{{\dot x}^2}{\cos(\varphi)
\dot \varphi} + a_4 \frac{{\dot y}^2}{\sin(\varphi) \dot
\varphi}\right).
\end{equation}


\subsection{Lagrangians for the third associated second-order system}

In section 2.1 we have described a third associated second-order
system (\ref{choice3VRD}) for the example of the vertically rolling
disk. That system comes actually from a comparison of the
variational nonholonomic and the Lagrange-d'Alembert nonholonomic
equations of motion we conducted in \cite{FB}. There we investigated
the conditions under which the variational nonholonomic Lagrangian
$L_V$ would reproduce the nonholonomic equations of motion when
restricted to the nonholonomic constraint manifold. Thus, instead of
associating second-order systems to nonholonomic equations and
applying the techniques of the inverse problem to derive the
Lagrangian (and the Hamiltonian), in \cite{FB} we started from a
specific Lagrangian (the variational nonholonomic Lagrangian $L_V$)
and investigated the conditions under which its variational
equations match the nonholonomic equations. Other relevant work on
this matter can be found in e.g. \cite{CDMM}.

In case of our class of nonholonomic systems with Lagrangian
(\ref{LEucl}) and constraints (\ref{constraints}) the variational
nonholonomic Lagrangian is simply
\begin{eqnarray*}\label{LV1} L_{V} &=& L -\sum_\alpha \frac{\partial L}{\partial
\dot{s}_{\alpha}}(\dot{s}_{\alpha}+A_{\alpha} \dot{r}_{2})\\ &=&
\onehalf (I_1 {\dot r}_1^2 + I_1 {\dot r}_1^2 - \sum_\alpha I_\alpha
{\dot s}_\alpha^2 ) - \sum_\alpha A_\alpha I_\alpha {\dot s}_\alpha
{\dot r}_2.
\end{eqnarray*}
A short calculation shows that its Euler-Lagrange equations in
normal form are given by
\begin{eqnarray} \label{choice3}
 &&\ddot r_{1} = - \big(\sum_\beta I_\beta A'_\beta {\dot s}_\beta\big) {\dot r}_2 , \qquad \qquad \ddot r_{2} = -N^2
\big(\sum_\beta I_\beta A_\beta A'_\beta\big) \dot r_{1}\dot r_{2} + \big(\sum_\beta I_\beta A'_\beta {\dot s}_\beta\big) {\dot r}_1,\nonumber\\
&& {\ddot s}_\alpha = -\Big(A'_{\alpha} - N^2 A_\alpha
\big(\sum_\beta I_\beta A_\beta A'_\beta\big)\Big) \dot r_{1} \dot
r_{2} - A_\alpha (\sum_\beta I_\beta A'_\beta {\dot s}_\beta) {\dot
r}_1.
\end{eqnarray}
In general, these systems are not associated to our class of
nonholonomic systems. That is, the restriction of their solutions to
the constraint manifold ${\dot s}_\alpha =- A_\alpha {\dot r}_2$ are
not necessarily solutions of the nonholonomic equations
(\ref{NHeq}). However, in case that the invariant measure density
$N$ is a constant, we have that $\sum_\beta I_\beta A_\beta
A'_\beta=0$. As a consequence,  all the terms in the equations
(\ref{choice3}) that contain $\sum_\beta I_\beta A'_\beta {\dot
s}_\beta$ vanish when we restrict those equations to the constraint
manifold and the equations in ${\ddot s}_\alpha$ integrate to the
equations of constraint (\ref{constraints}). The restriction of the
equations (\ref{choice3}) is therefore equivalent with the
nonholonomic equations ({\ref{NHeq}}). We conclude the following.

\begin{propo} If $N$ is constant, the equations
(\ref{choice3}) form an associated second-order system and, by
construction, they are equivalent to the Euler-Lagrange equations of
the variational nonholonomic Lagrangian $L_V$.
\end{propo}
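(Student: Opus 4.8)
The plan is to split the statement into its two halves. The second half is immediate: the system (\ref{choice3}) was introduced precisely as the normal form of the Euler--Lagrange equations of $L_V$, so the clause ``equivalent to the Euler--Lagrange equations of $L_V$'' holds by construction and needs no further argument. All the content lies in the first half, namely checking that when $N$ is constant the system (\ref{choice3}) really is an \emph{associated} second-order system in the sense of Section 3: every solution of the nonholonomic equations (\ref{NHeq}) must be a solution of (\ref{choice3}), and, conversely, restricting the solution set of (\ref{choice3}) to the curves satisfying the constraints (\ref{constraints}) must return exactly the nonholonomic solutions.

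First I would record the elementary fact that, since every $A_\alpha$ depends only on $r_1$, constancy of $N$ is equivalent to $\Sigma:=\sum_\beta I_\beta A_\beta A'_\beta=0$; this is just differentiation of $N^{-2}=I_2+\sum_\alpha I_\alpha A_\alpha^2$ (cf.\ (\ref{N}) and (\ref{im1})), which gives $N'=-N^3\Sigma$. It pays to also name the other combination appearing in (\ref{choice3}), say $S:=\sum_\beta I_\beta A'_\beta\dot s_\beta$. The crucial structural remark is then that $L_V$ does not depend on the fibre coordinates $s_\alpha$, so the conjugate momenta $p_{s_\alpha}=\partial L_V/\partial\dot s_\alpha=-I_\alpha(\dot s_\alpha+A_\alpha\dot r_2)$ are first integrals of (\ref{choice3}); in particular the constraint submanifold ${\cal C}=\{\dot s_\alpha+A_\alpha\dot r_2=0\}\subset TQ$ is invariant under the flow of (\ref{choice3}).

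Next I would restrict to ${\cal C}$. There one has $S=-\dot r_2\sum_\beta I_\beta A_\beta A'_\beta=-\dot r_2\,\Sigma=0$, so substituting $S=0$ and $\Sigma=0$ into (\ref{choice3}) collapses it to $\ddot r_1=0$, $\ddot r_2=0$ and $\ddot s_\alpha=-A'_\alpha\dot r_1\dot r_2$; the last of these is simply the time derivative of $\dot s_\alpha=-A_\alpha\dot r_2$ once $\ddot r_2=0$ is used, hence on ${\cal C}$ it integrates straight back to the constraint. Comparing with (\ref{NHeq}), whose $r_2$-equation also reads $\ddot r_2=0$ when $\Sigma=0$, this shows that the flow of (\ref{choice3}) restricted to ${\cal C}$ coincides with the nonholonomic flow. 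For the reverse inclusion I would take an arbitrary nonholonomic solution, note that it satisfies $\dot s_\alpha=-A_\alpha\dot r_2$ and hence lives on ${\cal C}$, and verify --- again using $\Sigma=0$ --- that all three equations of (\ref{choice3}) hold along it. This is in essence the computation already sketched in the paragraph preceding the statement.

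I do not anticipate a genuine obstacle: the whole thing is bookkeeping with the two sums $\Sigma$ and $S$. The one point worth being careful about in the write-up is exactly where the hypothesis ``$N$ constant'' enters. Invariance of ${\cal C}$ under (\ref{choice3}) holds for arbitrary $A_\alpha$ --- it is just the cyclic-coordinate conservation of the momenta $p_{s_\alpha}$ --- so the hypothesis plays no role there; it is needed only in the restricted dynamics, where $\Sigma=0$ is precisely what kills the surviving $\Sigma$-terms and forces $S=0$ on ${\cal C}$, so that the reduced equations become (\ref{NHeq}) rather than the genuinely different variational-nonholonomic dynamics. I would isolate that single step cleanly; everything else is the short computation indicated above.
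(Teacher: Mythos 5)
Your proposal is correct and follows essentially the same route as the paper, whose justification is the paragraph preceding the proposition: $N$ constant forces $\sum_\beta I_\beta A_\beta A'_\beta=0$, so on the constraint manifold the terms containing $\sum_\beta I_\beta A'_\beta\dot s_\beta$ vanish, the $\ddot s_\alpha$-equations integrate back to the constraints (\ref{constraints}), and the restricted dynamics of (\ref{choice3}) coincides with (\ref{NHeq}), while the equivalence with the Euler--Lagrange equations of $L_V$ holds by construction. Your additional remark that the momenta $\partial L_V/\partial\dot s_\alpha$ are conserved (so the constraint submanifold is invariant for arbitrary $A_\alpha$) is a pleasant structural clarification but does not alter the argument.
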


We refer to \cite{FB} for more details and some more general
statements on this way of finding a Lagrangian for a nonholonomic
system and we end the discussion on the third associated systems
here.


\section{Hamiltonian formulation and the constraints in phase space}

In the situations where we have found a regular Lagrangian, the
Legendre transformation leads to an associated Hamiltonian system.
Since the base solutions of the Euler-Lagrange equations of a
regular Lagrangian are also base solutions of Hamilton's equations
of the corresponding Hamiltonian, the Legendre transformation $FL$
will map those solutions of the Euler-Lagrange equations that lie in
the constraint distribution ${\mathcal D}$ to solutions of the
Hamilton equations that belong to the constraint manifold ${\mathcal
C} = FL({\mathcal D})$ in phase space. Recall however that the
Lagrangians for the second associated second-order systems (and
their Legendre transformation) were not defined on ${\dot r}_1 =0$,
and so will also the corresponding Hamiltonians.

Let us put for convenience $\rho(\dot r_{1})=\frac{1}{2}I_1{\dot
r_{1}}^2$ and $\sigma({\dot r}_2)=\onehalf I_2 {\dot r}_2^2$ in the
Lagrangians of the previous section.
\begin{propo}
Given the second associated second-order system (\ref{NHsode}), the
regular Lagrangian (\ref{L2as}) (away from $\dot r_{1}=0$) and
constraints (\ref{constraints}) on $TQ$ are mapped by the Legendre
transform to the Hamiltonian and constraints in $T^*Q$ given by:
\begin{equation}\label{Ham11} H= \frac{1}{2I_1} \left(p_{{1}}+
\onehalf N \left( \frac{p_{{2}}^2}{C_2} + \sum_\beta A_\beta
\frac{p_\beta^2}{C_\beta} \right)\right)^2, \qquad
C_2p_\alpha=-C_\alpha p_{{2}}.
\end{equation}
In case $N$ is constant, the second Lagrangian (\ref{L2as2}) and
constraints (\ref{constraints}) are transformed into
\begin{equation}\label{Ham2}
H= \frac{1}{2I_2} p_2^2 + \frac{1}{2I_1} \left(p_1+ \onehalf {N}
\left( \sum_\beta \frac{A_\beta}{a_\beta} {p_\beta^2}
\right)\right)^2, \quad I_2{ {N}{\dot r}_1 p_\alpha} +{a_\alpha}
p_2=0,
\end{equation}
where ${\dot r}_1(r_1,p_1,p_\alpha)=(p_1 + \onehalf {N} \sum_\alpha
A_\alpha p_\alpha^2/a_\alpha)/I_1$.
\end{propo}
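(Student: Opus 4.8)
The plan is a direct Legendre-transform computation that exploits the homogeneity structure of the two Lagrangians. After substituting $\rho(\dot r_{1})=\onehalf I_1\dot r_{1}^2$, write the first Lagrangian as $L=\onehalf I_1\dot r_{1}^2+L_1$, with $L_1=\frac{1}{2N}\big(C_2\dot r_{2}^2/\dot r_{1}+\sum_\beta C_\beta\dot s_\beta^2/(A_\beta\dot r_{1})\big)$. The key observation is that $\onehalf I_1\dot r_{1}^2$ is homogeneous of degree $2$ in the velocities while $L_1$ is homogeneous of degree $1$ (each summand is a degree-$2$ numerator over a degree-$1$ denominator). Hence, by Euler's theorem, $\dot q^i\,\partial L/\partial\dot q^i=I_1\dot r_{1}^2+L_1$, so the energy collapses to $H=\dot q^i p_i-L=\onehalf I_1\dot r_{1}^2$ while still in velocity coordinates; the only real work is re-expressing $\dot r_{1}$ through the momenta.

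First I would compute the fibre derivatives $p_{2}=\partial L/\partial\dot r_{2}=C_2\dot r_{2}/(N\dot r_{1})$ and $p_\beta=\partial L/\partial\dot s_\beta=C_\beta\dot s_\beta/(NA_\beta\dot r_{1})$, which invert (for $\dot r_{1}\neq0$) to $\dot r_{2}=N\dot r_{1}p_{2}/C_2$ and $\dot s_\beta=NA_\beta\dot r_{1}p_\beta/C_\beta$. Substituting these into $p_{1}=\partial L/\partial\dot r_{1}=I_1\dot r_{1}-\frac{1}{2N\dot r_{1}^2}\big(C_2\dot r_{2}^2+\sum_\beta C_\beta\dot s_\beta^2/A_\beta\big)$ and simplifying, the $\dot r_{1}^2$ factors cancel and one obtains $I_1\dot r_{1}=p_{1}+\onehalf N\big(p_{2}^2/C_2+\sum_\beta A_\beta p_\beta^2/C_\beta\big)$. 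Plugging this into $H=\onehalf I_1\dot r_{1}^2=\tfrac{1}{2I_1}(I_1\dot r_{1})^2$ produces exactly (\ref{Ham11}). For the constraint submanifold $\mathcal C=FL(\mathcal D)$, I substitute the inverted expressions for $\dot s_\beta$ and $\dot r_{2}$ into $\dot s_\alpha=-A_\alpha\dot r_{2}$; the common factor $NA_\alpha\dot r_{1}$ drops out, leaving $C_2p_\alpha=-C_\alpha p_{2}$.

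The second statement, for $N$ constant, is entirely analogous and in fact simpler, since then $\Xi_2=0$ so the $\dot r_{2}$-sector decouples. With $\rho=\onehalf I_1\dot r_{1}^2$ and $\sigma=\onehalf I_2\dot r_{2}^2$ one gets $p_{2}=I_2\dot r_{2}$ and $p_\beta=a_\beta\dot s_\beta/(NA_\beta\dot r_{1})$, hence $\dot s_\beta=NA_\beta\dot r_{1}p_\beta/a_\beta$. The same homogeneity argument (the degree-$2$ part is now $\onehalf I_1\dot r_{1}^2+\onehalf I_2\dot r_{2}^2$ and the remaining sum is degree $1$) gives $H=\onehalf I_1\dot r_{1}^2+\onehalf I_2\dot r_{2}^2=\tfrac{1}{2I_2}p_{2}^2+\onehalf I_1\dot r_{1}^2$, and inverting $p_{1}$ yields $I_1\dot r_{1}=p_{1}+\onehalf N\sum_\beta A_\beta p_\beta^2/a_\beta$, i.e. the stated expression for $\dot r_{1}(r_1,p_1,p_\alpha)$; substitution gives (\ref{Ham2}). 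The constraint $\dot s_\alpha=-A_\alpha\dot r_{2}$ becomes $NA_\alpha\dot r_{1}p_\alpha/a_\alpha=-A_\alpha p_{2}/I_2$, i.e. $I_2N\dot r_{1}p_\alpha+a_\alpha p_{2}=0$.

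I do not expect a genuine obstacle here — the argument is bookkeeping — but the point to be careful about is that the Legendre map is only a local diffeomorphism where $\dot r_{1}\neq0$ (equivalently where $p_{1}+\onehalf N(\cdots)\neq0$), which is precisely the regularity already recorded for (\ref{L2as}) and (\ref{L2as2}); so the formulas for $H$ and for $\mathcal C$ are valid only on that open set, in accordance with the discussion preceding the proposition. The only mild nuisance in the computation is tracking the $1/A_\beta$ weights and the fact that the $\dot s_\alpha$-equations carry no sum over $\alpha$, but these features cancel cleanly at each step.
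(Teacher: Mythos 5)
Your proposal is correct and follows essentially the same route as the paper: compute the fibre derivatives of (\ref{L2as}) and (\ref{L2as2}), invert them on the set $\dot r_1\neq 0$, express $H=p_i\dot q^i-L$ in the momenta, and push the constraints (\ref{constraints}) through the Legendre map to obtain $C_2p_\alpha=-C_\alpha p_2$, respectively $I_2N\dot r_1 p_\alpha+a_\alpha p_2=0$. The only differences are cosmetic: you work directly with the specialized Lagrangians rather than the general form (\ref{Lag1}) followed by the substitution (\ref{nonholxi}), and your Euler-theorem observation that the degree-$1$ homogeneous part drops out of the energy, leaving $H=\onehalf I_1\dot r_1^2$ (plus $\onehalf I_2\dot r_2^2$ in the second case), is a tidy way of supplying the algebra the paper compresses into ``one can easily verify.''
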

\begin{proof} The Legendre transformation gives for the Lagrangian
(\ref{Lag1})
\begin{equation}\label{mom}
p_{1}= I_1 \dot r_{1}-\onehalf\sum_b C_b\exp(-\xi_b)\frac{{\dot
q}_b^2}{{\dot r_{1}}^2},\qquad p_b = C_b\exp(-\xi_b)\frac{{\dot
q}_b}{\dot r_{1}},
\end{equation}
from which one can easily verify that the corresponding Hamiltonian
is
\begin{equation}
H= \frac{1}{2I_1} \left(p_{{1}}+\onehalf \sum_b \exp(\xi_b)
\frac{p_b^2}{C_b} \right)^2. \label{Ham1}
\end{equation}
In the case of the second associated second-order systems in the
form (\ref{NHsode}), the $\xi_a$ take the form (\ref{nonholxi}), and
we obtain the Hamiltonian in expression (\ref{Ham11}). From
(\ref{mom}) we can then compute the constraint manifold ${\cal C}$
in phase space. Since now
\[ p_{{2}} = C_2\frac{\dot r_{2}}{{N}\dot r_{1}}
\quad\mbox{and}\quad p_\alpha = C_\alpha\frac{{\dot
q}_\alpha}{{N}\dot r_{1}},
\]
the constraints (\ref{constraints}) can be rewritten as
\[
\dot r_{1} \left(\frac{p_\alpha}{C_\alpha} +
\frac{p_{2}}{C_2}\right)=0,
\]
where $\dot r_{1} = \frac{1}{I_1} (p_{{1}} + \onehalf {N}
(p_{{2}}{\dot r}_2^2/C_2 +\sum_\beta A_\beta p_\beta^2/C_\beta))$.
Assuming as always that $\dot r_{1}\neq 0$, we get that the
constraint manifold in phase space is given by
$C_2p_\alpha=-C_\alpha p_{{2}}$ for all $\alpha$.

An analogous calculation with the Lagrangian (\ref{L2as2}) gives the
Hamiltonian and the constraints in (\ref{Ham2}), in the case where
$N$ is constant.
\end{proof}

We can recover the Hamiltonians of \cite{QB} from Proposition 2. As
perhaps the simplest example, note that with
$(r_{1},r_{2},s_{\alpha})=(x,y,z)$, by taking $C_2$ and $C_3$ both
to be 1, and $A(r_{1})=x$, we recover the Hamiltonian and the
constraint that appears in \cite{QB} for the nonholonomic free
particle.

Consider now the knife edge on the plane. Taking
$C_2=C_3=1/\sqrt{m}$ and $A(\phi)=-\tan(\phi)$ gives:
\begin{equation}\label{Hknife}
H= \frac{1}{2J} \left(p_\phi + \onehalf (\cos(\phi) p_x^2 -
\sin(\phi) p_y^2)\right)^2,
\end{equation}
while the constraint manifold becomes
\begin{equation}\label{Cknife}
p_x+p_y=0.
\end{equation}

For the rolling disk we get for the first Hamiltonian (\ref{Ham11})
\[
H=\frac{1}{2J} \left( p_\varphi +\frac{1}{2\sqrt{I + mR^2}} \left(
\frac{p_\theta^2}{C_2} - \frac{\cos(\varphi)p_x^2}{ C_3} -
\frac{\sin(\varphi)p_y^2}{C_4} \right)\right)^2,
\]
and $C_2p_x=-C_3 p_\theta$ and $C_2p_y=-C_4 p_\theta$ for the
constraints. These are not the Hamiltonian and the constraints that
appear in \cite{QB} though. It turns out that the Hamiltonian and
the constraints in \cite{QB} are in fact those that are associated
to the second Hamiltonian (\ref{Ham2}). It is, with, for example,
$a_3=a_4=-J/\sqrt{I+mR^2}$ of the form
\[
H=\frac{1}{2I} p_\theta^2 + \frac{1}{2}\left(p_\varphi + \onehalf
p_x^2\cos(\varphi) + \onehalf p_y^2\sin(\varphi)\right)^2
\]
and the constraints are
\[
\dot \varphi p_x =p_\theta,\qquad \dot \varphi p_y = p_\theta
\]
where $\dot \varphi = p_\varphi + \onehalf \cos(\varphi) p_x^2 +
\onehalf \sin(\varphi) p_y^2$ or, equivalently,
\[
p_x - p_y =0, \qquad \dot \varphi p_x - p_\theta =0,
\]
as the constraints appears in \cite{QB}.


\section{Pontryagin's Maximum Principle}

Consider the optimal control problem of finding the controls $u$
that minimize a given cost function $G(x,u)$ under the constraint of
a first order controlled system $\dot x = f(x,u)$. One of the
hallmarks of continuous optimal control problems is that, under
certain regularity assumptions, the optimal Hamiltonian can be found
by applying the Pontryagin maximum principle. Moreover, in most
cases of physical interest, the problem can be rephrased so as to be
solved by using Lagrange multipliers $p$. Form the Hamiltonian
$H^P(x,p,u) = \langle p, f(x,u) \rangle - p_0 G(x,u)$ and calculate,
if possible, the function $u^*(x,p)$ that satisfies the optimality
conditions
\[
 \fpd{H^P}{u}(x,p,u^*(x,p)) \equiv 0.
\]
  Then, an extremal $x(t)$ of the optimal control problem
 is also a base solution of Hamilton's equations for the
optimal Hamiltonian given by $H^*(x,p)=H^P(x,p,u^*(x,p))$. The
optimal controls $u^*(t)$ then follow from substituting the
solutions $(x(t),p(t))$ of Hamilton's equations for $H^*$ into
$u^*(x,q)$.

Such a usage of the multiplier approach can also be applied with
succes to the mechanics of physical systems with holonomic
constraints. However, in the case of nonholonomically constrained
systems the Lagrange multiplier approach, also called the {\em
vakonomic} approach by Arnold \cite{Ar}, generally leads to dynamics
that do not reproduce the physical equations of motion (see
\cite{CDMM,Le} and references therein). Thus, the rich interplay
between Pontryagin's Maximum Principle, the vakonomic approach, and
the physical equations of motion of a constrained system breaks down
when the constraints are nonholonomic. However as we showed in a
previous paper \cite{FB}, for certain systems and initial data the
vakonomic approach and Lagrange-D'Alembert principle yield
equivalent equations of motion.

We will show here for the second associated systems
\[
\ddot r_{1} =0, \qquad \ddot q_a = \Xi_a(r_{1}) \dot q_a\dot r_{1},
\]
that we can also find the Hamiltonians of the previous section via a
rather ad hoc application of Pontryagin's Maximum Principle. Hereto,
let us put $\Xi_a=\xi'_a$ as before and observe that the above
second-order system can easily be solved for $(\dot r_{1}(t),{\dot
q}_a(t))$. Indeed, obviously $\dot r_{1}$ is constant along
solutions, say $u_{{1}}$. We will suppose as before that $u_{1}\neq
0$. From the $q_a$-equations it also follows that ${\dot q}_a/
\exp(\xi_a)$ is constant, and we will denote this constant by $u_a$.
To conclude,
\[ \dot r_{1}(t) = u_{{1}}, \quad {\dot
q}_a(t)= u_a \exp(\xi_a(r_{1}(t))).
\]
Keeping that in mind, we can consider the following {\em associated
controlled first-order system}
\begin{equation}\label{fode} \dot r_{1} =
u_{{1}}, \quad {\dot q}_a= u_a \exp(\xi_a(r_{1}))
\end{equation}
(no sum over $a$), where $(u_{{1}},u_a)$ are now interpreted as
controls.

The next proposition relates the Hamiltonians of Proposition 2 to
the optimal Hamiltonians for the optimal control problem of certain
cost functions, subject to the constraints given by the controlled
system (\ref{fode}).

\begin{propo} The optimal Hamiltonian $H^*$ of the optimal control problem of minimizing the cost function
\[ G_{1}(r_{1},q_a,u_{{1}},u_a)  =  \onehalf\left(I_1 u_{{1}}^2 +
\sum_a C_a \exp(\xi_a(r_{1})) \frac{u_a^2}{u_{{1}}}\right)
\]
subject to the dynamics (\ref{fode}) is given by:
\begin{equation}\label{h1} H^*(q,p)= \frac{1}{2I_1} \left( p_{r_{1}}
+ \onehalf \sum_b \exp(\xi_b) \frac{p_b^2}{C_b} \right)^2.
\end{equation}
If $\Xi_2$ is zero, the optimal Hamiltonian for the optimal control
problem of minimizing the cost function
\[
G_{2}(r_{1},q_a,u_{{1}},u_a)  = \onehalf\left(I_1 u_{{1}}^2 + I_2
u_2^2 + \sum_\alpha a_\alpha \exp(\xi_\alpha(r_{1}))
\frac{u_\alpha^2}{ u_{{1}}}\right),
\]
subject to the dynamics (\ref{fode}) is given by:
\begin{equation}\label{h2} H^*(q,p) = \frac{1}{2I_2} p_{2}^2 +
\frac{1}{2I_1} \left(  p_{{1}} + \onehalf \sum_\beta \exp(\xi_\beta)
\frac{p_\beta^2}{a_\beta} \right)^2.
\end{equation}
In case the controlled system is associated to a nonholonomic system
(that is, in case the $\xi_a$ take the form (\ref{nonholxi})), the
above Hamiltonians are respectively the Hamiltonians (\ref{Ham11})
and (\ref{Ham2}) of Proposition 2.
\end{propo}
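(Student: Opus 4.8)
The plan is to apply Pontryagin's maximum principle directly to each of the two optimal control problems in turn. First I would form the Pontryagin Hamiltonian for the first problem, using the normal case $p_0=1$, namely
\[
H^P = p_{r_1} u_1 + \sum_a p_a\, u_a \exp(\xi_a(r_1)) - \onehalf\Big(I_1 u_1^2 + \sum_a C_a \exp(\xi_a) \frac{u_a^2}{u_1}\Big),
\]
where the term $p_{r_1}u_1 + \sum_a p_a u_a\exp(\xi_a)$ is $\langle p, f\rangle$ coming from the controlled dynamics (\ref{fode}). I would then impose the optimality conditions $\partial H^P/\partial u_a = 0$ and $\partial H^P/\partial u_1 = 0$. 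The $u_a$-equations give $p_a \exp(\xi_a) = C_a \exp(\xi_a) u_a/u_1$, hence $u_a^* = u_1 p_a / C_a$ (no sum), and substituting this back into the $u_1$-equation yields, after the $u_a$-dependent pieces reorganize, a relation of the form $p_{r_1} = I_1 u_1 - \onehalf \sum_a \exp(\xi_a) p_a^2/C_a$, so that $u_1^* = (p_{r_1} + \onehalf\sum_b \exp(\xi_b) p_b^2/C_b)/I_1$. Plugging $u_1^*, u_a^*$ into $H^P$ and simplifying then gives precisely (\ref{h1}); the algebra here mirrors the Legendre transformation computation in the proof of Proposition 2, since the cost function $G_1$ has been chosen to equal the Lagrangian (\ref{Lag1}) with $\dot q_a$ replaced by $u_a\exp(\xi_a)$ and $\dot r_1$ by $u_1$.

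Second, for the case $\Xi_2=0$ I would repeat the same procedure with the cost function $G_2$. The only structural difference is that the $r_2$-variable (i.e. $q_2$) now enters $G_2$ through a term $\onehalf I_2 u_2^2$ rather than $\onehalf C_2 \exp(\xi_2) u_2^2/u_1$; since $\Xi_2 = \xi_2' = 0$ we may take $\xi_2$ constant (equal to $0$ without loss of generality), so the controlled equation for $q_2$ is just $\dot q_2 = u_2$. The optimality condition in $u_2$ then gives $p_2 = I_2 u_2$, i.e. $u_2^* = p_2/I_2$, decoupled from the rest; the optimality conditions in $u_1$ and in the $u_\alpha$ proceed exactly as before with $C_\alpha$ replaced by $a_\alpha$. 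Substituting back produces the Hamiltonian (\ref{h2}), with its characteristic split into a free $p_2^2/(2I_2)$ piece plus the squared term.

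Finally, to establish the last sentence I would specialize to the case where the controlled system is associated to a nonholonomic system, so that $\xi_2 = \ln N$ and $\xi_\alpha = \ln(N A_\alpha)$ as in (\ref{nonholxi}). Then $\exp(\xi_2) = N$ and $\exp(\xi_\alpha) = N A_\alpha$, and substituting these into (\ref{h1}) gives exactly $H = \frac{1}{2I_1}(p_1 + \onehalf N(p_2^2/C_2 + \sum_\beta A_\beta p_\beta^2/C_\beta))^2$, which is (\ref{Ham11}); likewise (\ref{h2}) becomes (\ref{Ham2}). I would also note that the second case requires $N$ constant, which is precisely the hypothesis $\Xi_2=0$, so the two sets of assumptions match.

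I expect the main obstacle to be purely bookkeeping rather than conceptual: one must be careful that the optimality equations $\partial H^P/\partial u = 0$ actually determine $u^*(x,p)$ (this is where the regularity/nondegeneracy of $G_1, G_2$ — i.e. the conditions $d^2\rho/d\dot r_1^2 \neq 0$ and all $C_\alpha \neq 0$ — enters, guaranteeing invertibility of the map $u \mapsto p$), and that the cross-terms in $u_a^2$ and $u_1$ recombine correctly when back-substituted, since $u_1$ appears both linearly (in $\langle p, f\rangle$) and as $1/u_1$ (in the cost). Getting the signs and the factor of $\onehalf$ right in the reorganization $\langle p, f\rangle - G = \langle p,f\rangle - (\text{quadratic in } u)$ so that it collapses to the perfect square is the only delicate point, and it is the same delicate point already handled in the proof of Proposition 2.
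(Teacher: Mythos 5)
Your proposal is correct and follows essentially the same route as the paper's own proof: form the Pontryagin Hamiltonian $H^P=\langle p,f\rangle-G$, solve the stationarity conditions (using $u_1\neq 0$) to get $u_a^*=u_1^*p_a/C_a$ and $I_1u_1^*=p_1+\onehalf\sum_b\exp(\xi_b)p_b^2/C_b$ (with the obvious modification $I_2u_2^*=p_2$ in the $\Xi_2=0$ case), substitute back to obtain the perfect-square Hamiltonians, and then specialize $\xi_2=\ln N$, $\xi_\alpha=\ln(NA_\alpha)$ to recover (\ref{Ham11}) and (\ref{Ham2}).
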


\begin{proof} The Hamiltonian $H^P$ is
\begin{equation}\label{Ha}
H^P(r_{1},q_a,p_{{1}},p_a,u_{{1}},u_a)= p_{{1}} u_{{1}} + \sum_a p_a
u_a
 \exp(\xi_a) - G_1.
\end{equation}
The optimality conditions $\partial H^P/\partial u_{{1}}=0$,
$\partial H^P/\partial u_{a}=0$, together with the assumption that
$u_{{1}}\neq 0$, yield the following optimal controls as functions
of $(q,p)$:
\begin{eqnarray*}
I_1 u_{{1}}^* & = & p_{{1}} + \frac{1}{2}\sum_a \exp(\xi_a) \frac{p_a^2}{C_a}, \label{ua} \\
\frac{u_a^*}{u_{{1}}^*} & = & \frac{p_a}{C_a}.
\end{eqnarray*}
For the Hamiltonian $H^*(q,p)=H^P(q,p,u^*(q,p))$, we get
\begin{eqnarray*}
H^*(q,p) &=& \left(p_1 - \onehalf I_1 u^*_1\right) u^*_1 + \sum_a
\exp(\xi_a)u^*_a \left(p_a -\onehalf C_a \frac{u^*_a}{u^*_1}\right) \\
&=& \frac{1}{I_x} \left[
 \left(\onehalf p_1 - \frac{1}{4} \sum_a\exp(\xi_a) \frac{p^2_a}{C_a}\right)\left(p_1 + \onehalf \sum_b
 \exp(\xi_b)\frac{p^2_b}{C_b}\right)\right.
\\ &&\qquad\qquad\qquad\qquad+ \left. \onehalf \sum_a \exp(\xi_a) \frac{p_a^2}{C_a} \left(p_1
+\onehalf \sum_b\exp(\xi_b) \frac{p^2_b}{C_b}\right)   \right]
\\ &=& \frac{1}{2I_1} \left(   p_{{1}} + \onehalf \sum_b
\exp(\xi_b) \frac{p_b^2}{C_b} \right)^2,
\end{eqnarray*}
which is exactly the Hamiltonian (\ref{Ham1}).

For the second cost function, with $\Xi_2=0$, we get for
Pontryagin's Hamiltonian
\[
H^P= p_{r_{1}} u_{r_{1}} + p_{r_{2}} u_{r_{2}} + \sum_\alpha
p_\alpha u_\alpha \exp(\xi_\alpha) - G_2.
\]
The optimal controls as functions of $(q,p)$ are now
\begin{eqnarray*}
I_1 u_{r_{1}}^* & = & p_{r_{1}} + \frac{1}{2}\sum_\alpha \exp(\xi_\alpha) \frac{p_\alpha^2}{a_\alpha},\\
I_2 u_{r_{2}}^* &=& p_{r_{2}},\\ \frac{u_\alpha^*}{u_{r_{1}}^*} & = &
\frac{p_\alpha}{C_\alpha}.
\end{eqnarray*}
With this the Hamiltonian becomes
\[
H^*(q,p)= \frac{1}{2I_2} p_{r_{2}}^2 + \frac{1}{2I_1} \left(  p_{r_{1}} +
\onehalf \sum_\beta \exp(\xi_\beta) \frac{p_\beta^2}{a_\beta}
\right)^2,
\]
which is exactly (\ref{Ham2}) after the substitution (\ref{nonholxi}).
\end{proof}


\section{Related Research Directions and Conclusions}

In essence, the method we have introduced in the previous sections
resulted in an unconstrained, variational system which when
restricted to an appropriate submanifold reproduces the dynamics of
the underlying nonholonomic system. Although we have restricted our
attention to a certain explicit subclass of nonholonomic systems,
many of the more geometric aspects of the introduced method seem to
open the door to generalizing the results to larger classes of
systems. For example, a lot of what has been discussed was in fact
related to the somehow hidden symmetry of the system. That is to
say: both the Lagrangian (\ref{LEucl}) and the constraints
(\ref{constraints}) of the systems at hand were explicitly
independent of the coordinates $r_2$ and $s_\alpha$. This property
facilitated the reasoning we have applied in our study of the
corresponding inverse problems. One possible path to the extension
of some of the results in this paper may be the consideration of
systems with more general (possibly non-Abelian) symmetry groups. A
recent study \cite{MikeTom} of the reduction of the invariant
inverse problem for invariant Lagrangians may be helpful in that
respect.

The methods of the inverse problem have lead us to the Lagrangians
for the second associated second-order systems (\ref{sode}). For
those systems the $q^a$-equations were, apart from the coupling with
the $r_{1}$-equation, all decoupled from each other. It would be of
interest to see, for more general systems, how such a form of
partial decoupling influences the question of whether or not a
regular Lagrangian exists.

In the previous section we have found a new link between the fields
of optimal control, where equations are derived from a Hamiltonian,
and nonholonomic mechanics, where equations are derived from a
Hamiltonian and constraint reaction forces. By combining elements of
both derivations, for certain systems one can formulate the
mechanics in a form analogous to the treatment of constraints
arising from singular Lagrangians that leads to the Dirac theory of
constraints \cite{Q}, which allows for the quantization of
constrained systems wherein the constraints typically arise from a
singular Lagrangian (see \cite{Kl} and references therein). Central
to the method is the modification of the Hamiltonian to incorporate
so-called
 first and second class constraints.

The method proposed in this paper in a sense provides an analogue to
Dirac's theory and allows for the investigation into the
quantization of certain nonholonomic systems by similarly modifying
the usual Hamiltonian. In attempting to quantize the class of
systems we have considered, we can now instead use one of the
Hamiltonians found in Proposition 3. We should note that there have
already been some attempts to quantize nonholonomic systems
\cite{BR,Ed,QB,HP,Kl,B2}, and that the results have been mixed,
mainly due to the inherent difficulties arising in the quantization
procedure, as well as the difficulties in dealing with the system's
constraints. However, the present work enables one to treat the
constraints more like an initial condition, since, for example, the
constraint (\ref{Cknife}) is really the relation $c_{1}+c_{2}=0$,
where $p_{x}=c_{1}$, and $p_{y}=c_{2}$ follows from $H$ in
(\ref{Hknife}) being cyclic in $x,y$. Such a treatment of the
constraints eliminates much of the difficulty arising in attempting
to quantize some nonholonomic systems.

As example,  consider the knife edge on the plane, in view of
(\ref{Hknife}). We can take the quantum Hamiltonian $\hat{H}$ to be
of the form
\begin{equation}
\hat{H} = -\frac{\hbar^{2}}{2}\left[\frac{\partial}{\partial
\phi}-i\frac{\hbar}{2}\left(\cos(\phi)\frac{\partial^{2}}{\partial
x^{2}}-\sin(\phi)\frac{\partial^{2}}{\partial
y^{2}}\right)\right]^{2}, \label{qkep}
\end{equation}
which is a Hermitian operator, and consider the quantum version of
the constraint manifold (\ref{Cknife}) $\hat{p}_{x}+\hat{p}_{y}=0$
(Here $\hat{}\,$ stands for the quantum operator form under
canonical quantization). There have in the literature been
essentially two different ways to impose these quantum constraints:
{\em strongly} and {\em weakly}. One may require that the quantum
constraints hold {\em strongly}, by restricting the set of possible
eigenstates of (\ref{qkep}) to those which satisfy the quantum
operator form of constraint (\ref{Cknife}). On the other hand, one
may only require that the eigenstates $|\psi\rangle$ of (\ref{qkep})
satisfy the quantum constraints in mean
\begin{equation}
\langle\psi|\hat{p}_{x}+\hat{p}_{y}|\psi\rangle =0, \label{cim}
\end{equation}
a weaker condition but arguably a more physically relevant viewpoint
also advocated in \cite{QB,HP}. In \cite{QB} the authors show that
(using the example of the vertical rolling disk) the weaker version
can be used to recover the classical nonholonomic motion in the
semi-classical limit of the quantum dynamics. Details of our
application of the methods in this paper to the quantization of
nonholonomic systems will be presented in a future publication.


\section*{Acknowledgements}
The research of AMB and OEF was supported in part by the Rackham
Graduate School of the University of Michigan, through the Rackham
Science award, and through NSF grants DMS-0604307 and CMS-0408542.
TM acknowledges a Marie Curie Fellowship within the
 6th European Community Framework Programme and a research grant of the Research Foundation - Flanders.
 We would like to thank Alejandro Uribe and Willy Sarlet for helpful discussions.


\end{document}